\title{Efficient Approximation of the Matching Distance for $2$-parameter persistence}
\author{Michael Kerber\\
Graz University of Technology \\
Institut für Geometrie \\
Kopernikusgasse 24, 8010 Graz, Austria \\
\And
Arnur Nigmetov\\
Graz University of Technology \\
Institut für Geometrie \\
Kopernikusgasse 24, 8010 Graz, Austria \\
}
\newtheorem{thm}{Theorem}
\newtheorem{mylem}[thm]{Lemma}
\newcommand{\ra}[1]{\renewcommand{\arraystretch}{#1}}
\pgfplotsset{compat=1.14}
\renewcommand{\leq}{\leqslant}
\renewcommand{\geq}{\geqslant}
\renewcommand{\epsilon}{\varepsilon}
\renewcommand{\phi}{\varphi}
\newcommand{\eps}{\varepsilon}
\mathchardef\mhyphen="2D
\newcommand\restr[2]{{
  \left.\kern-\nulldelimiterspace 
  #1 
  \vphantom{\big|} 
  \right|_{#2} 
  }}
\newcommand{\RR}{\mathbb{R}}
\newcommand{\ignore}[1]{}
\newcommand{\libraryname}[1]{{\textsc{#1}}\xspace}
\newcommand{\hera}{\libraryname{Hera}}
\newcommand{\rivet}{\libraryname{RIVET}}
\newcommand{\pleq}{\preceq}
\DeclareMathOperator{\push}{{push}}
\newcommand{\weight}{w}
\newcommand{\dsGH}{\textrm{GH}\xspace}
\newcommand{\dsED}{\textrm{ED}\xspace}
\newcommand{\dsR}{\textrm{RND}\xspace}
\crefname{mydef}{definition}{definitions}
\crefname{mylem}{lemma}{lemmas}
\crefname{mythm}{theorem}{theorems}
\newcommand{\R}{\mathbb{R}}
\newcommand{\po}{\leq}
\newcommand{\slice}{L}
\newcommand{\slices}{\mathcal{L}}
\newcommand{\slicebox}{B}
\newcommand{\sangle}{\gamma}
\newcommand{\sslope}{\lambda}
\newcommand{\smu}{\mu}
\newcommand{\vw}{\vec{w}}
\newcommand{\centerslice}{L_c}
\newcommand{\centersslope}{\sslope_c}
\newcommand{\centersmu}{\smu_c}
\newcommand{\ffunc}{\phi}
\newcommand{\origin}{\mathcal{O}}
\DeclareMathOperator{\weightedpush}{{wpush}}
\DeclareMathOperator{\wpush}{\weightedpush}
\DeclareMathOperator{\dmatch}{d_{M}}
\newcommand\colpt[2]{%
    \left( \begin{array}{c}#1\\#2\end{array} \right)}
\begin{document}

\maketitle

\begin{abstract}
The matching distance is a computationally tractable 
topological measure to compare multi-filtered simplicial complexes.
We design efficient algorithms for approximating the matching distance
of two bi-filtered complexes to any desired precision $\eps>0$.
Our approach is based on a quad-tree refinement strategy introduced
by Biasotti et al.~\cite{italian}, but we recast their approach entirely
in geometric terms. This point of view leads to several novel
observations resulting in a practically faster algorithm.
We demonstrate this speed-up by experimental comparison and provide
our code in a public repository which provides the first efficient publicly
available implementation of the matching distance.
\end{abstract}

\section{Introduction}

Persistent homology~\cite{elz-topological,carlsson-survey,eh-computational,oudot-book} is one of the major concepts in the quickly evolving
field of topological data analysis. The concept is based on the idea
that studying the topological properties of a data set across various scales
yields valuable information that is more robust to
noise than restricting to a fixed scale.

We distinguish the case of \emph{single-parameter persistence},
where the scale is expressed by a single real parameter, and the case of
\emph{multi-parameter persistence}, in which the scale consists
of two or more parameters that vary independently. 
The former case is the predominant one in the literature.
The entire homological multi-scale evolution of the data set can be expressed
by a multi-set of points in the plane, the so-called \emph{persistence diagram}. Moreover, the \emph{interleaving distance} yields a distance measure
between two data sets by measuring the difference in their topological evolution. For a single parameter, this distance can be rephrased as a combinatorial
matching problem of the corresponding persistence diagrams (known as
the \emph{bottleneck distance}) and computed efficiently~\cite{kmn-geometry}.
These results are part of a rich theory of single-parameter persistence,
with many algorithmic results and applications.

The case of \emph{multi-parameter persistence} received significantly less
attention until recently. One reason is the early result that 
a complete combinatorial structure 
such as the persistence diagram does not exist for two or more parameters~\cite{cz-theory}.
Moreover, while the interleaving distance can be straight-forwardly
generalized to several parameters, its computation becomes NP-hard
already for two parameters, as well as any approximation to a factor less than $3$~\cite{bbk-computing}. On the other hand, data sets with several scale
parameters appear naturally in applications, and an efficiently computable
distance measure is therefore highly important.

We focus on the \emph{matching distance}~\cite{cerri2013betti,italian,klo-exact} 
as a computationally tractable lower bound on the interleaving distance~\cite{landi2018rank}. 
It is based on the observation that when restricting the multi-parameter
space $\R^d$ to a one-dimensional affine subspace (that is, a line in $\R^d$),
we are back in the case of single-parameter
persistence. We can compute the bottleneck distance between the two persistence
diagrams restricted to the same subspace. The matching distance
is then defined as the supremum of all bottleneck distances
over all subspaces (see Section~\ref{sec:matching_distance} for the precise definition). 
The matching distance has been used 
in shape analysis ~\cite{cerri2013betti,italian} (where it is known as the matching distance
between size functions),
for virtual screening in 
computational chemistry~\cite{keller2018persistent},
and a recent algorithm~\cite{klo-exact} computes the distance exactly 
in polynomial time (with a large exponent).

Our contribution is an improvement of an approximation algorithm 
by Biasotti et al.~\cite{italian}
for the $2$-parameter case which we summarize next.
We parameterize the space of all lines of interest 
as a bounded rectangle $R\subset\R^2$. 
To each point $p$ in the rectangle,
we assign $f(p)$ as the bottleneck distance between the two persistence diagrams
when restricting the data sets to the line parameterized by $p$.
The matching distance is then equal to $\sup_{p\in R} f(p)$.
The major ingredient of the algorithm is a \emph{variation bound}
which tells how much $f(p)$ varies when $p$ is perturbed
by a fixed amount. For any subrectangle $S\subseteq R$ with center $c$,
$f(c)$ and the variation bound yield an upper bound of $f$ within $S$.
We then obtain an $\eps$-approximation with a simple branch-and-bound
scheme, subdividing $R$ with a quad-tree in BFS order and stopping the
subdivision of a rectangle when its upper bound is sufficiently small.

\subparagraph{Our contributions.}
We aim for a fast implementation
useful for practical applications of multi-parameter persistent homology. Towards this goal, we make the following contributions:
\begin{enumerate}
\item We rephrase the approximation algorithm by Biasotti et al. entirely
in elementary geometric terms.
We think that the geometric point of view complements
their formulation and makes the structure of the algorithm
more accessible. Indeed, we are able to simplify several arguments from~\cite{italian}~-- we defer a summary of these simplifications to 
\cref{app:comparison}.

\item We provide a simple yet crucial algorithmic
improvement: instead of using the global variation bound for all rectangles
of the subdivision, we derive adaptive local variation bounds for each
rectangle individually. This results in much smaller upper bounds
and avoids many subdivisions in the approximation algorithm.
\item We experimentally compare our version of the global
bound with the usage of the adaptive bounds. We show that the 
speed-up factor of the sharpest adaptive bound is typically between $3$ and $8$,
depending on the input bi-filtrations (for some inputs the speed-up is $15$).
\item Our code is available as part of \hera library\footnote{\url{https://bitbucket.org/grey_narn/hera/src/master/matching/}}
    and provides an efficient implementation for computing
    the matching distance for bi-filtrations.
\end{enumerate}

\subparagraph{Outline.}
We start with a short introduction to filtrations and persistent homology
in Section~\ref{sec:background}. We define the matching distance
in Section~\ref{sec:matching_distance}.
The algorithm to approximate it is described in Section~\ref{sec:approx_alg},
and our local variation bounds are derived in Section~\ref{sec:bound_primitive}.
We do experiments in Section~\ref{sec:experiments} and conclude
in Section~\ref{sec:conclusion}.

\section{Background}
\label{sec:background}
\subparagraph{Mono-Filtrations.}
Fixing a base set $V$, a \emph{$k$-simplex} $\sigma$ is a non-empty subset
of $V$ of cardinality $k+1$. A \emph{face} $\tau$ of $\sigma$
is a non-empty subset of $\sigma$. A \emph{simplicial complex} $K$
is a collection of simplices such that whenever $\sigma\in K$, every
face of $\sigma$ is in $K$ as well. The dimension of a simplicial complex $K$
is the maximal $k$ such that $K$ contains a $k$-simplex.
As an example, a graph is merely a simplicial complex of dimension $1$.
Following graph-theoretic notations, we call $0$-simplices of $K$ 
\emph{vertices} and $1$-simplices of $K$ \emph{edges}.
A \emph{subcomplex} of $K$ is a subset $L\subseteq K$ such that $L$ is
again a simplicial complex.
We will henceforth assume that the base set $V$ is finite, which implies
also that the simplicial complex $K$ is finite.

A \emph{mono-filtration} is a simplicial complex $K$ equipped with
a function $\ffunc:K\mapsto\R$ such that for any simplex $\sigma$
and any face $\tau$ of $\sigma$, it holds that $\ffunc(\tau)\leq\ffunc(\sigma)$.
We call $\ffunc(\sigma)$ the \emph{critical value of $\sigma$} 
and define for any $v\in\R$
\[
K_v:=\{\sigma\in K\mid\ffunc(\sigma)\leq v\}.
\]
By the condition on $\ffunc$ from above, $K_v$ is a subcomplex of $K$
for each $v$. Moreover, whenever $v\leq w$, we have that $K_v\subseteq K_w$.
Hence, the collection $(K_v)_{v\in\R}$ yields a nested sequence
of simplicial complexes, which is entirely determined by the critical values
of each simplex in $K$. See Figure~\ref{fig:mono-filtration} 
for an illustration.

\begin{figure}
\centering
\includegraphics[width=\textwidth]{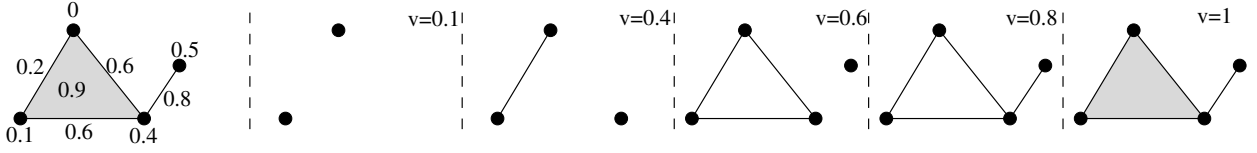}
\caption{Left: Mono-filtration of a simplicial complex $K$ of dimension $2$.
The critical value of each simplex is displayed. Right: Examples
of the complexes $K_v$ for various values of $v$.}
\label{fig:mono-filtration}
\end{figure}

\subparagraph{Persistence diagrams.}
We are interested in the topological changes of $(K_v)_{v\in\R}$
when $v$ increases continuously. A \emph{persistence diagram}
is a multi-set of points in $\R\times (\R\cup\{\infty\})$ 
with all points strictly  above the diagonal $x=y$. The general definition
requires a digression into representation theory and homological algebra
(e.g., see~\cite{oudot-book}).
Instead, we explain the idea on the problem of tracking connected
components of $K_v$ within the filtration, which is a special case of the
general theory. Assume for simplicity that no two simplices have the same
critical value. Whenever we reach the critical value $b$ of a vertex, a new
connected component comes into existence.
We call this a \emph{birth}. We say that the component born at $b$ \emph{dies}
at value $d$ if there is an edge with critical value $d$ that merges the
connected component with another connected component which was born before $b$.
In that case, $(b,d)$ is a point in the persistence diagram, denoting that
the corresponding connected component persisted from scale $b$ to scale $d$.
Assuming that $K$ is connected, each component gets assigned a unique
death value except the component born at the minimal critical value.
We assign the death value $\infty$ to this component, adding an infinite
point to the diagram.
The resulting diagram is called the 
\emph{persistence diagram in (homological) dimension $0$}.
See Figure~\ref{fig:pers_diag} (left).
Similar diagrams can be defined for detecting tunnels, voids,
and higher-dimensional holes in the simplicial complex.

\begin{figure}
\centering
\includegraphics[width=0.3\textwidth]{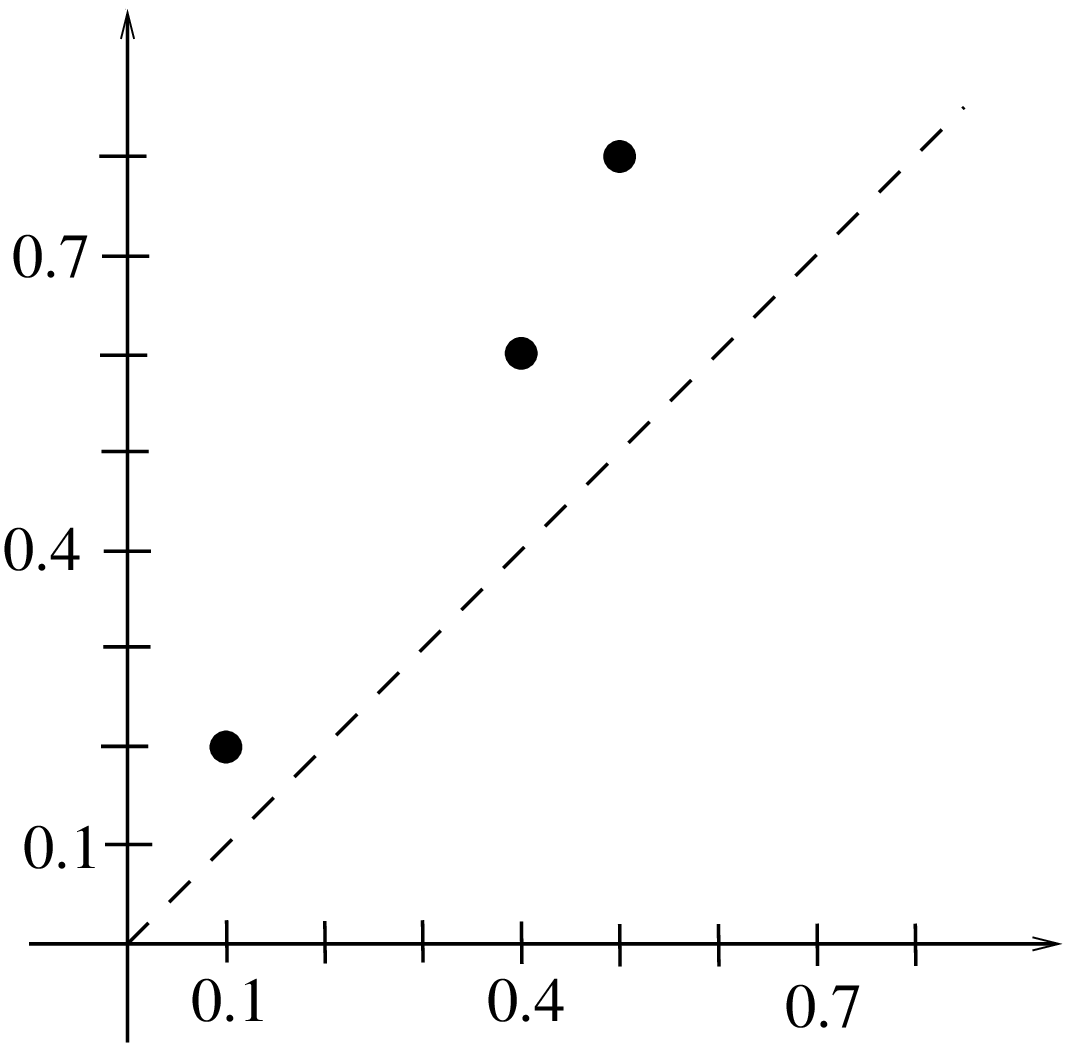}
\hspace{1cm}
\includegraphics[width=0.3\textwidth]{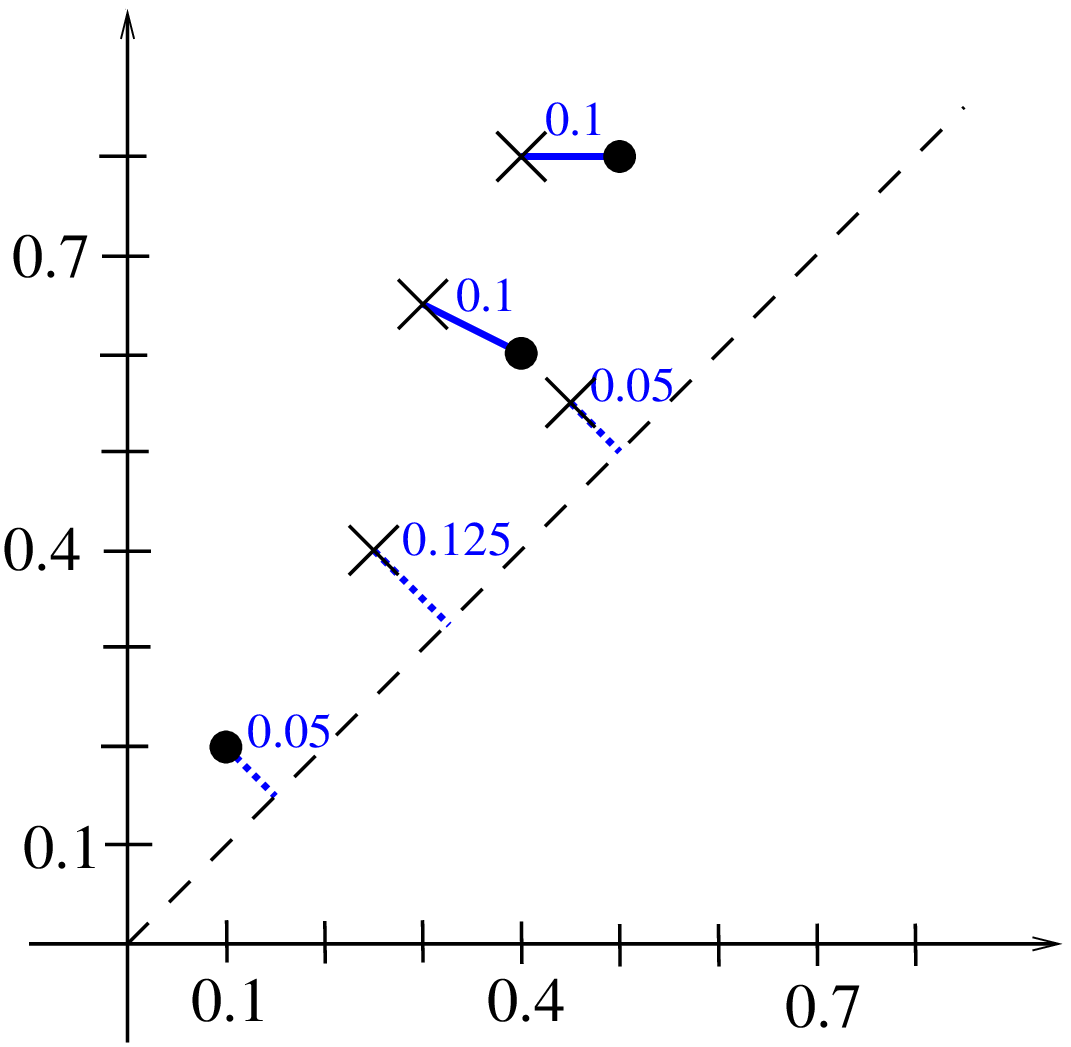}
\caption{Left: The persistence diagram in dimension $0$ of the example from Figure~\ref{fig:mono-filtration}
(plus the point $(0,\infty)$ that is not drawn).
Note that indeed, connected components are born at $0$, $0.1$, $0.4$ and $0.5$,
and the latter three components die at $0.2$, $0.6$ and $0.8$, respectively.
Right: A partial matching of two diagrams (depicted by circles and x-shapes).
The cost of each match and of each unmatched vertex is displayed. The cost
of this matching is $0.125$ which is in fact the optimal cost in this example,
so the bottleneck distance between the diagrams is $0.125$.}
\label{fig:pers_diag}
\end{figure}

We define a distance on two persistence diagrams $D_1$ and $D_2$ next. 
Fixing a partial matching between $D_1$ and $D_2$, we assign
to each match of $p\in D_1$ and $q\in D_2$ the cost 
$\|p-q\|_\infty=\max\{|p_x-q_x|,|p_y-q_y|\}$, with the understanding
that $\infty-\infty=0$. In particular, the cost of
matching a finite to an infinite point is $\infty$. Every unmatched point $p$
gets assigned the cost $\frac{p_y-p_x}{2}$ which corresponds to the
$L_\infty$-distance from $p$ to the diagonal. Taking the maximum over all matched
and unmatched points in $D_1$ and $D_2$ results in the cost of the 
chosen partial matching. The \emph{bottleneck distance} between $D_1$ and $D_2$
is then the minimum cost over all possible partial matchings between $D_1$
and $D_2$. 
See Figure~\ref{fig:pers_diag} (right) for an example.
Since filtrations give rise to persistence diagrams, 
we also talk about the bottleneck distance between two filtrations
and denote it by $d_B(\cdot,\cdot)$ from now on.

We will need the following properties of the bottleneck distance.
The proofs of the first three of them follow directly from the definition.
\begin{itemize}
\item $d_B$ satisfies the triangle inequality: $d_B(F,H)\leq d_B(F,G)+d_B(G,h)$
for three filtrations $F$, $G$, $H$.
\item $d_B$ is shift-invariant: let $F=(K,\ffunc)$ be a filtration, 
define $F_r$
be the filtration $(K,\ffunc+r)$, where the critical value of each simplex
is shifted by $r$. Then $d_B(F,G)=d_B(F_r,G_r)$.
\item $d_B$ is homogeneous: let $F=(K,\ffunc)$ be a filtration, $\lambda$ be a positive number,
define $\lambda F$
be the filtration $(K, \lambda \ffunc)$.
Then $d_B(\lambda F,\lambda G)= \lambda d_B(F,G)$.
\item $d_B$ is stable~\cite{ceh-stability}: let $F_1=(K,\ffunc_1)$ and $F_2=(K,\ffunc_2)$
be two filtrations of the same complex such that for each $\sigma\in K$,
$|\ffunc_1(\sigma)-\ffunc_2(\sigma)|\leq\eps$. Then, $d_B(F_1,F_2)\leq\eps$.
\end{itemize}

\subparagraph{Bi-filtrations.}
\label{par:bifil}
Define the partial order $\po$ on $\R^2$ as
$p\po q$ if and only if $p_x\leq q_x$ and $p_y\leq q_y$.
Geometrically, $p\leq q$ if and only if $q$ lies in the upper-right quadrant
with corner $p$. A \emph{($1$-critical) bi-filtration} is a simplicial
complex $K$ together with a function $\ffunc:K\to\R^2$ such that 
for every simplex $\sigma$ and every face $\tau$ of $\sigma$, we have
that $\ffunc(\tau)\po\ffunc(\sigma)$. As before, $\ffunc(\sigma)$ is called the
\emph{critical value} of $\sigma$. Our assumption
that every simplex has a unique critical value is 
just for the sake of simpler exposition; our ideas extend to the 
\emph{$k$-critical} case
where each $\sigma$ has up to $k$ incomparable critical values
(Appendix~\ref{app:k_critical}). Fixing $p\in\R^2$, we define
\[
K_p:=\{\sigma\in K\mid\ffunc(\sigma)\po p\}.
\]
Similar to the mono-filtration case, $K_p$ is
a subcomplex and $K_p\subseteq K_q$, whenever $p\po q$.

It is worth visualizing the construction of $K_p$ geometrically. 
We can represent the bi-filtration as a multi-set of points in $\R^2$,
where each point corresponds to a simplex and is placed at the critical
value of the simplex. The complex $K_p$ then consists of all simplices
that are placed in the lower-left quadrant with $p$ at its corner.
See Figure~\ref{fig:bi-filtration} for an example.

\begin{figure}
\centering
    \includegraphics[width=12cm]{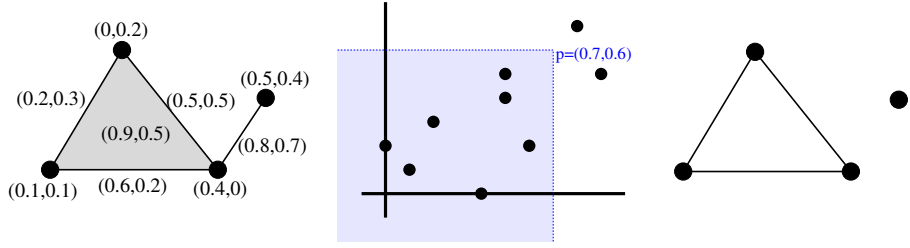}
\caption{Left: Bi-filtration of a simplicial complex $K$ of dimension $2$.
Middle: Every point in the plane denotes the critical value of a simplex.
The shaded rectangle yields the simplices that belong to $K_p$.
Right: Illustration of $K_p$ as a subcomplex of $K$.}
\label{fig:bi-filtration}
\end{figure}

\section{The matching distance}
\label{sec:matching_distance}

\subparagraph{Slices.}
Bi-filtrations are too wild 
to admit a simple combinatorial description such as a persistence diagram.
But we can obtain a persistence diagram when restricting to a 
one-dimensional affine subspace. 
For all concepts in this subparagraph, see Figure~\ref{fig:slice}
for an illustration.
We consider a 
non-vertical line $\slice$ with positive slope, which we call a \emph{slice}.
For every slice, we distinguish a point $\origin$, called the \textit{origin}
of the slice. We let $\slices$ denote the set of all slices.
Since the slope is positive, for any two distinct points $p$, $q$ on $\slice$
either $p\po q$ or $q\po p$ holds. Hence, $\po$ becomes a total order
along~$\slice$.

\begin{figure}
\centering
\includegraphics[width=12cm]{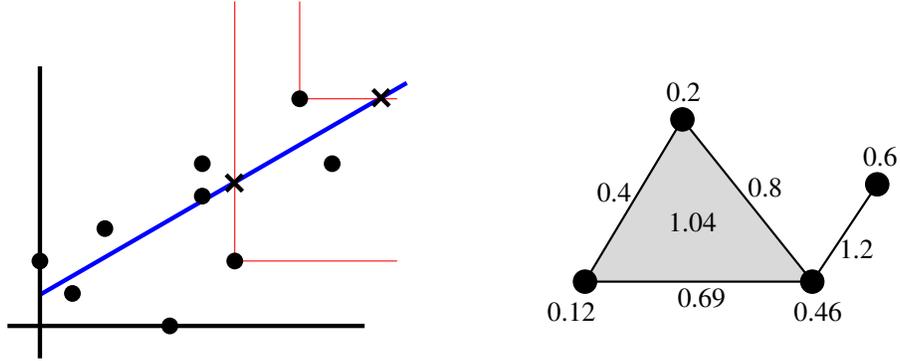}
\caption{Left: The slice parameterized by $(\frac{\pi}{6},0.1)$.
For two critical values of the bi-filtration from above,
we illustrate the construction of the point $q$ (displayed by a cross shape).
The push of the critical value is simply the Euclidean distance
to the point $(0,0.1)$, which is the origin of the slice.
Right: The non-weighted restriction on the slice. Each simplex
gets its push as critical value.}
\label{fig:slice}
\end{figure}

Given $p\in\R^2$, let $q$ be the minimal point on $\slice$ 
(with respect to $\po$) such that $p\leq q$.
Geometrically, $q$ is the intersection
of $\slice$ with the boundary of the upper-right quadrant of $p$,
or equivalently, the horizontally-rightwards projection of $p$
to $\slice$ if $p$ lies above $\slice$, or the vertically-upwards projection
of $p$ to $\slice$ if $p$ lies below $\slice$. Since $q$ lies on $\slice$,
$q$ can be written as 
\[\origin+\lambda_p\left(\begin{array}{c}\cos\sangle\\\sin\sangle\end{array}\right)\]
    where  $\sangle$ is the angle between $\slice$ and $x$-axis,
and $\lambda_p\in\R$.
We define $\lambda_p$ as the \emph{push} of
$p$ to $L$, which can be formally written as a function
$\push:\R^2\times\slices\to\R$. 
 Geometrically, the push is simply the (signed) distance
of the point $q$ to the origin of the slice.
Fixing a bi-filtration $F=(K,\ffunc)$,
the composition $\push(\cdot,\slice)\circ\ffunc$
yields a function $K\to\R$, and it can be readily checked that 
this function yields a mono-filtration, which we call the \emph{non-weighted restriction} of $F$ onto $\slice$.
See \cref{fig:slice} (right) for an example.

\subparagraph{Matching distance.}
Given two bi-filtrations $F^1$, $F^2$,
we could try to define a distance between
them by taking the supremum of bottleneck distances
between their non-weighted restrictions on all slices.
However, this
does not yield a meaningful result. The reason is that for almost
horizontal and almost vertical slices, the pushes of two close-by points
can move very far away from each other~-- see Figure~\ref{fig:unweighted_push_problem} for an example. 
As a result, the bottleneck distance
along such slices becomes arbitrarily large.

\begin{figure}
\centering
\includegraphics[width=5cm]{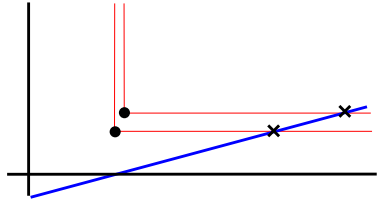}
\caption{Two points that are close to each other might have pushes
far from each other. Note that by making the
slice more flat, the distance between the pushes can be made arbitrarily large.}
\label{fig:unweighted_push_problem}
\end{figure}

Instead, we introduce a weight for each slice.
Let $\sangle$ denote the angle between the slice $\slice$
and $x$-axis.
We call $\slice$
\emph{flat} if $\sangle\leq\frac{\pi}{4}$
(i.e., if its slope is $\leq 1$)
and \emph{steep} if $\sangle\geq\frac{\pi}{4}$. Then we set
\[
\weight(\slice):=\begin{cases}
\sin\sangle& \text{if $\slice$ is flat}\\
\cos\sangle& \text{if $\slice$ is steep}.
\end{cases}
\]

We define the matching distance between the bi-filtrations
$F^1=(K^1,\ffunc^1)$ and $F^2=(K^2,\ffunc^2)$ as
\[
    \dmatch(F^1,F^2):=\sup_{\slice\in\slices} \weight(\slice) \cdot d_B(restr(F^1, \slice), restr(F^2, \slice)),
\]
where $restr(F^i, \slice)$
denotes the non-weighted restriction of $F^i$ onto $\slice$.
Note that while the non-weighted restrictions depend on the choice of the
origin, a different choice of origin for a slice only results in a 
uniform translation of the critical values of both mono-filtrations.
Hence, the bottleneck distance does not change because of shift-invariance.
This means that the matching distance is independent 
of the choice of the origins.

Moreover, the shift-invariance of $d_B$ implies that if we alter $\ffunc_1$
and $\ffunc_2$ such that every value is translated 
by the same vector $v\in\R^2$,
the matching distance does not change. Recalling that we can visualize
bi-filtrations as finite multi-sets of points in $\R^2$, we can hence assume
without loss of generality that all these points are in the upper-right
quadrant of the plane, that is, $\ffunc^i(\sigma)\in[0,\infty)\times [0,\infty)$.

Let us now define the \textit{weighted push} of a point $p$
to a slice $\slice$
as $\wpush(p, {\slice}) = \weight(\slice) \push(p, \slice)$,
and let $F_\slice$ denote the mono-filtration
induced by $\sigma \mapsto \wpush(\ffunc(\sigma), \slice)$. 
We call $F_{\slice}$ a \emph{weighted restriction} of $F$ onto $\slice$.
Note that $F_\slice$ equals $restr(F, \slice)$ except that
all critical values are scaled by the factor $\weight(\slice)$.
Using homogeneity of $d_B$, we see that
\begin{eqnarray}
    \dmatch(F^1, F^2) = \sup_{\slice\in\slices} d_B(F^1_{\slice}, F^2_\slice).
\label{eqn:matching}
\end{eqnarray}
We will use this equivalent definition of the matching distance in the remaining
part of the paper, and 'restriction' will always mean 'weighted restriction'.

\section{The approximation algorithm}
\label{sec:approx_alg}
The idea of the approximation algorithm for $\dmatch$ is to sample
the set of slices through a finite sample, and chose the maximal
bottleneck distance between the (weighted) restriction encountered
as the approximation value. In order to execute this plan, 
we need to parameterize the space of slices and need to compute 
the restriction of a parameterized slice efficiently.

\subparagraph{Slice Parameterization.}
Every slice has a unique point where the line enters the
positive quadrant of $\R^2$, which is either its intersection with the
positive $x$-axis, the positive $y$-axis, or the point $(0,0)$.
From now on, we always use this point as the origin of the slice.

\begin{figure}
\centering
\includegraphics[width=5cm]{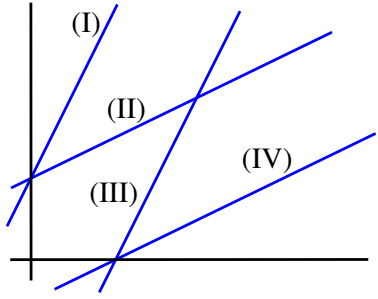}
\caption{A steep $y$-slice (I), a flat $y$-slice (II), a steep $x$-slice(III) and a flat $x$-slice. The slopes are $2$ for the steep and $\frac{1}{2}$ for
the flat slices, and the origin is at $(0,2)$ for the $y$-slices and at $(2,0)$ for the $x$-slices. Consequently, all four slices are parameterized by $(\frac{1}{2},2)$.}
\label{fig:slice_types}
\end{figure}

We call a slice an \emph{$x$-slice} if its origin lies on the positive $x$-axis, and call it a \emph{$y$-slice} if its origin lies on the positive
$y$-axis (slices through the origin are both $x$- and $y$-slices).
Recall also that a slice is flat if its slope is less than $1$,
and steep if it is larger than $1$.
Thus, a slice belongs to one of the four types:
flat $x$-slices, flat $y$-slices, steep $x$-slices
and steep $y$-slices. 
Every slice is represented as a point $(\sslope, \smu)\in (0,1]\times [0,\infty)$ where the interpretation of the parameters depends on the type of the slice
as follows: Let $\origin=(\origin_x,\origin_y)$ be the origin of $\slice$,
and recall that $\sangle$ is the angle of the slice with the
$x$-axis. Then
\[
    \sslope = \begin{cases}
        \tan(\sangle),\mbox{ if $\slice$ is flat}\\
        \cot(\sangle),\mbox{ if $\slice$ is steep}
    \end{cases},\quad\quad
\smu = \begin{cases}
        \origin_x &\text{if $\slice$ is $x$-slice}\\
        \origin_y &\text{if $\slice$ is $y$-slice}
    \end{cases}.
\]
In other words, $\sslope$ is the slope of the line in the flat case, and
the inverse of the slope in the steep case, and $\smu$ contains the non-trivial
coordinate of the origin.
Note that the same pair of parameters can parameterize different slices
depending on the type. Figure~\ref{fig:slice_types} illustrates this.

\subparagraph{Weighted pushes.}
We next show a simple formula for the value of $\wpush(p,\slice)$
depending on the type of the slice.
\begin{mylem}
With the chosen parameterization and choice of origin on $\slice$, 
    $\wpush(p,\slice)$
is computed according to the formulas given 
in \cref{tbl:wpush-formulas}.
\label{lem:wpush}
\end{mylem}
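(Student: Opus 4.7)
The plan is a direct computation from the definition of $\push$ in Section~\ref{sec:matching_distance}, specialized to the four slice types.

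First I would obtain a closed-form expression for the unweighted push. By definition, $q = \origin + \lambda_p(\cos\sangle, \sin\sangle)$ is the $\po$-minimal point on $\slice$ above $p$, so the two coordinate inequalities $q_x \geq p_x$ and $q_y \geq p_y$ translate (using $\cos\sangle, \sin\sangle > 0$) into
\[
\lambda_p \;=\; \max\!\left(\frac{p_x - \origin_x}{\cos\sangle},\ \frac{p_y - \origin_y}{\sin\sangle}\right).
\]
Geometrically the two arguments correspond to the two possible projections (vertical vs.\ horizontal) onto $\slice$; the $\max$ automatically picks the correct one depending on which side of $\slice$ the point $p$ lies.

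Second, I would substitute the origin according to the slice type using the conventions fixed in the preceding subparagraph. For an $x$-slice we have $\origin = (\smu, 0)$, yielding $\lambda_p = \max\bigl((p_x - \smu)/\cos\sangle,\ p_y/\sin\sangle\bigr)$; for a $y$-slice we have $\origin = (0, \smu)$, yielding $\lambda_p = \max\bigl(p_x/\cos\sangle,\ (p_y - \smu)/\sin\sangle\bigr)$.

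Third, I would multiply by $\weight(\slice)$, pushing this positive factor inside the $\max$. For a flat slice $\weight(\slice) = \sin\sangle$ and $\sslope = \tan\sangle$, so the two arguments become $\sslope\,(p_x - \origin_x)$ and $(p_y - \origin_y)$. For a steep slice $\weight(\slice) = \cos\sangle$ and $\sslope = \cot\sangle$, so they become $(p_x - \origin_x)$ and $\sslope\,(p_y - \origin_y)$. Combining with the case analysis of Step~2 yields exactly the four rows of \cref{tbl:wpush-formulas}.

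There is no substantial obstacle here; the only point requiring a little care is checking that the $\max$ expression for $\lambda_p$ remains correct when one of its two arguments is negative (which occurs, e.g., when $p_x < \smu$ for an $x$-slice). In that regime the associated ``projection'' would land below $\origin$ on $\slice$, but the other argument is non-negative and dominates the $\max$, so the formula is still the geometrically correct push.
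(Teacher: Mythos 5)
Your proof is correct, and it takes a mildly different, more unified route than the paper. The paper's proof works case by case: it writes out one representative case (a flat $y$-slice with $p$ above $\slice$), computes the intersection $q$ of $\slice$ with the line $y=p_y$, notes that the weight $\sin\sangle$ cancels the trigonometric denominator, and declares the remaining seven cases analogous. You instead extract a single closed form for the unweighted push, $\lambda_p=\max\bigl((p_x-\origin_x)/\cos\sangle,\ (p_y-\origin_y)/\sin\sangle\bigr)$, directly from the defining inequalities $q_x\geq p_x$ and $q_y\geq p_y$, and then obtain all eight table entries by one substitution of $\origin$ per slice type and one multiplication by $\weight(\slice)$. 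This buys uniformity: the above/below case distinction of the table is encoded algebraically in which argument of the $\max$ dominates, there is no ``analogously'' covering unshown cases, and the resulting expressions visibly extend continuously to $\sslope=0$ as used after the lemma; the paper's route, in exchange, gives a very explicit geometric picture of $q$ in each case. The only step you assert rather than verify is the link between your $\max$ and the table's row labels, namely that the $y$-argument attains the maximum exactly when $p$ lies above $\slice$; this is the one-line equivalence $(p_y-\origin_y)/\sin\sangle\geq(p_x-\origin_x)/\cos\sangle \Leftrightarrow p_y-\origin_y\geq\tan\sangle\,(p_x-\origin_x)$ (using $\sin\sangle,\cos\sangle>0$), and it is worth stating explicitly, but it is not a gap. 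Your closing remark about a negative argument being dominated is also fine, since critical values are assumed to lie in the upper-right quadrant.
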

\begin{proof}
The proof of the lemma is a series of elementary calculations.
Let us consider, for example, the case of a flat $y$-slice.
In this case the slice $\slice = (\sslope, \smu)$ is given by
\[
\left\{\colpt{0}{\mu}+\rho\colpt{\cos\sangle}{\sin\sangle}\mid\rho\in\R\right\},
\]
If $p=(p_x,p_y)$ is above $\slice$,
we consider the point $q = (q_x, q_y)$ which is the intersection of $\slice$
and the line $y = p_y$.
Obviously, $q_y = p_y$, and, since $q$ lies
on $\slice$, the second coordinate yields
\[\rho=\frac{q_y-\mu}{\sin\sangle}=\frac{p_y-\mu}{\sin\sangle}.\]
By definition, $\rho$ is the push of $p$ to $\slice$
and since $\slice$ is flat, we have that $\weight(\slice)=\sin\sangle$
which cancels with the denominator. The other $7$ cases are proved
analogously.
%
\end{proof}

\begin{table}
    \ra{1.2}
\begin{tabular}{@{}ccccccc@{}}\toprule
  \phantom{p above L} &   & \multicolumn{2}{c}{$y$-slices} & \phantom{a} & \multicolumn{2}{c}{$x$-slices}\\
    \cmidrule{3-4}  \cmidrule{6-7}
   & & flat & steep                  &  & flat & steep\\
    $p$ above $\slice$   & &  $ p_y - \smu$  & $\sslope(p_y - \smu)$&   & $ p_y $           & $\sslope p_y$  \\ 
    $p$ below $\slice$   & &  $\sslope p_x $  & $p_x$    &     &  $\sslope (p_x - \smu) $  & $p_x - \smu$ \\
\bottomrule
\end{tabular}
    \caption{Formulas for weighted push of $(p_x, p_y)$ onto a slice $\slice = (\sslope, \smu)$.}
\label{tbl:wpush-formulas}
\end{table}

%
All 8 expressions in \cref{tbl:wpush-formulas}
involve only addition and multiplication without trigonometric functions.
Hence we can extend them continuously to $\sslope=0$, which
corresponds to horizontal lines (in the flat case)
or vertical lines (in the steep case). 
With this interpretation, 
we can extend $\slices$ to a set $\bar\slices$ in (\ref{eqn:matching}),
containing these limit cases, without changing the supremum.

Next, we observe that we can restrict our attention to a bounded range
of $\smu$-parameters. For that, let $X$ denote the maximal $x$-coordinate and
$Y$ be the maximal $y$-coordinate among all critical values
of $F^1$ and of $F^2$. 
For a $y$-slice (steep or flat) $\slice = (\sslope, \smu)$ with $\smu > Y$,
let $\slice' = (\sslope, Y)$ be the parallel slice with origin at $(0,Y)$.
All critical values of $F^{1,2}$ are below $\slice$ and $\slice'$
by construction (recall that all critical points are assumed 
in the upper-right quadrant),
hence we obtain the push by projecting vertically upwards. 
Looking at the second row in \cref{tbl:wpush-formulas},
we see that the weighted pushes are independent of $\smu$,
and therefore equal for $\slice$ and $\slice'$. 
Hence, the weighted bottleneck
distances along $\slice$ and $\slice'$ are equal: $d_B(F^1_{\slice}, F^2_{\slice}) = d_B(F^1_{\slice'}, F^2_{\slice'})$.
See Figure~\ref{fig:slice_space_proof_illu} for an illustration.
We conclude that for $y$-slices it suffices to consider $0 \leq \smu \leq Y$
in (\ref{eqn:matching}) without changing the matching distance.
An analogous argument shows that for $x$-slices, it is only necessary
to consider $0 \leq \smu \leq X$.

\begin{figure}
\centering
\includegraphics[width=4cm]{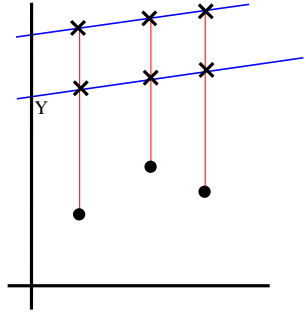}
\caption{Illustration for the fact that slices with larger value of $\smu$
can be ignored.}
\label{fig:slice_space_proof_illu}
\end{figure}

To summarize the last two observations, we arrive at the following statement.
There are sets $\slices_1$ of flat $x$-slices, $\slices_2$ of steep $x$-slices,
$\slices_3$ of flat $y$-slices, and $\slices_4$ of steep $y$-slices
(with each set containing some vertical/horizontal lines as limit case)
such that
\begin{eqnarray}
\dmatch(F^1, F^2) = \sup_{\slice\in\slices_1\cup\ldots\cup\slices_4} d_B(F^1_{\slice}, F^2_\slice)
\label{eqn:matching_compact}
\end{eqnarray}
and such that $\slices_1$ and $\slices_2$ are parameterized by 
$[0,1]\times[0,X]$ and $\slices_3$ and $\slices_4$ by
$[0,1]\times[0,Y]$. 

\subparagraph{Approximation.}
We present an approximation algorithm that, given two bi-filtrations
$F^1$ and $F^2$ and some $\eps>0$ returns a number $\delta$ such that
\[
\dmatch(F^1,F^2)-\eps \leq \delta\leq \dmatch(F^1,F^2).
\]
We assume that the two bi-filtrations are given as simplicial complexes,
i.e., a list of simplices, where each simplex is annotated with
two real values denoting the critical value of the simplex.
In the description, we set $T:=\{\text{$x$-flat}, \text{$x$-steep}, \text{$y$-flat}, \text{$y$-steep}\}$ for the type of a slice.
The algorithm is based on the following two primitives:
\begin{description}
\item[\texttt{Eval($F^1, F^2, \slice$)}] Computes $d_B(F^1_\slice,F^2_\slice)$,
where $\slice$ is specified by the triple $(\sslope,\smu,t)$
where $(\sslope,\smu)$ are the parameterization of $\slice$ and $t\in T$
denotes its type.
\item[\texttt{Bound($F^1, F^2, \slicebox,t$)}] 
If $\slicebox$ is an axis-parallel rectangle and $t\in T$,
the pair $(B,t)$ specifies a set of slices $\slices_0$. 
The primitive computes a number $\mu\in\R$ such that, for every $\slice\in\slices_0$, 
\[d_B(F^1_\slice,F^2_\slice)\leq\mu.\]

\end{description}

With these two primitives, we can state our approximation algorithm:
from now on, we refer to axis-parallel rectangles 
as \emph{boxes} for brevity.
We start by computing maximal coordinates $X$ and $Y$
of critical values of $F^1$ and $F^2$
and enqueueing the four initial items $([0,1]\times [0,Y], \text{$y$-steep})$,
$([0,1]\times [0,Y], \text{$y$-flat})$,
$([0,1]\times [0,X], \text{$x$-steep})$, and
$([0,1]\times [0,X], \text{$x$-flat})$
into a FIFO-queue.
We also maintain a variable $\rho$ storing the largest bottleneck distance
encountered so far, initialized to $0$.

Now, we pop items from the queue and repeat the following steps:
for an item $(\slicebox,t)$, let $\slice$ denote the slice that corresponds to the center point of $\slicebox$. We call \texttt{Eval($F^1$,$F^2$,$\slice$)}
and update $\rho$ if the computed value is bigger than the current maximum.
Then, we compute $\mu\gets$\texttt{Bound($F^1$,$F^2$,$\slicebox$,$t$)}.
If $\mu>\rho+\eps$, we split $\slicebox$ into $4$ sub-boxes
$\slicebox_1,\ldots,\slicebox_4$
of equal dimensions (using the center as splitpoint)
and enqueue $(\slicebox_1,t),\ldots,(\slicebox_4,t)$. 
When the queue is empty,
we return $\delta\gets \rho$. This ends the description
of the algorithm.

Assuming that the above algorithm terminates (which is unclear at this point
because it depends on the implementation of the \texttt{Bound} primitive),
the output is indeed an $\eps$-approximation.
This can be derived directly from the termination condition
of the subdivision and the fact that $\rho$ is non-decreasing
during the algorithm. See Appendix~\ref{app:correctness} for details.

A variant of the above algorithm computes a relative approximation of
the matching distance, that is, a number $\delta$ such that
\[
\dmatch(F^1,F^2) \leq \delta\leq (1+\eps)\dmatch(F^1,F^2).
\]
The algorithm is analogous to the above, with the difference that
a box is subdivided if $\mu >(1+\eps)\rho$,
and at the end of the algorithm $(1+\eps)\rho$ is returned as $\delta$.
The correctness of this variant follows similarly. 
However, 
the algorithm terminates only if $\dmatch(F^1,F^2)>0$,
and its complexity depends on the value of the matching distance.

\medskip

What is needed to realize the \texttt{Eval} primitive? First, we compute the weighted pushes of each critical value 
of $F^1$ and of $F^2$ in time proportional to the number
of critical values using Lemma~\ref{lem:wpush}.
Then, we compute the persistence diagrams of $F^1$
and of $F^2$, and their bottleneck distance.
Both steps are well-studied standard tasks in persistent homology,
and several practically efficient algorithms have been studied. We use \textsc{Phat}~\cite{bkrw-phat}
for computing persistence diagrams and \textsc{Hera}~\cite{kmn-geometry}
for the bottleneck computation.

\section{The \texttt{Bound} primitive}
\label{sec:bound_primitive}
Recall that the input of \texttt{Bound} is $(F^1,F^2,\slicebox,t)$,
where $(\slicebox,t)$ specifies a collection of slices
of type $t$. In what follows, we will identify points in $\slicebox$
with the parameterized slice, writing $\slice\in\slicebox$
to denote that $\slice$ is obtained from a pair of parameters $(\sslope,\smu)\in\slicebox$ with respect to type $t$ (which we skip for notational convenience).

Let $\centerslice$ be the slice corresponding to the center of $\slicebox$.
The \emph{variation} of a point $p\in\R^2$ for $\slicebox$ denotes
how much the weighted push of $p$
changes when the slice is changed within $\slicebox$:
\[
    v(p,\slicebox):=\max_{\slice\in\slicebox} |\weightedpush_p(\slice)-\weightedpush_p(\centerslice)|.
\]
For a bi-filtration $F$, we define
\[
v(F,\slicebox):=\max_{\text{$p$ critical value of $F$}} v(p,\slicebox).
\]
The variation yields an upper bound
for the bottleneck distance within a box:
\begin{mylem}\label{lem:upper_bound_lemma}
With the notation as before, we have that for two filtrations $F^1$, $F^2$
that
\[
\sup_{\slice\in\slicebox}d_B(F^1_{\slice},F^2_{\slice})\leq v(F^1,\slicebox)+d_B(F^1_{\centerslice},F^2_{\centerslice})+v(F^2,\slicebox)
\]
\end{mylem}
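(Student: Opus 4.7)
The plan is a direct application of the triangle inequality for $d_B$ followed by the stability property, both of which were listed as properties of the bottleneck distance in Section~\ref{sec:background}.

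First, I would fix an arbitrary $\slice \in \slicebox$ and insert the center slice $\centerslice$ as an intermediate term, obtaining
\[
d_B(F^1_{\slice}, F^2_{\slice}) \leq d_B(F^1_{\slice}, F^1_{\centerslice}) + d_B(F^1_{\centerslice}, F^2_{\centerslice}) + d_B(F^2_{\centerslice}, F^2_{\slice})
\]
by two applications of the triangle inequality. The middle term already appears in the desired bound, so only the first and third terms need to be controlled.

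Next, I would observe that for $i \in \{1,2\}$, the two mono-filtrations $F^i_{\slice}$ and $F^i_{\centerslice}$ are defined on the \emph{same} simplicial complex $K^i$, and differ only in their critical-value functions. For every simplex $\sigma \in K^i$, the difference of critical values is exactly
\[
|\wpush(\ffunc^i(\sigma),\slice) - \wpush(\ffunc^i(\sigma),\centerslice)|,
\]
which by definition of the pointwise variation is at most $v(\ffunc^i(\sigma),\slicebox)$, and hence at most $v(F^i,\slicebox)$ by taking the maximum over critical values of $F^i$. Applying the stability property of $d_B$ then yields
\[
d_B(F^i_{\slice}, F^i_{\centerslice}) \leq v(F^i,\slicebox)
\]
for $i = 1,2$. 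Substituting these two bounds into the triangle-inequality chain gives the desired estimate for the fixed slice $\slice$, and taking the supremum over $\slice \in \slicebox$ on the left-hand side (the right-hand side being independent of $\slice$) completes the proof.

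There is no real obstacle here: the statement is essentially a bookkeeping observation that combines triangle inequality with stability, once one notices that $F^i_{\slice}$ and $F^i_{\centerslice}$ live on the same underlying complex so that stability applies directly. The only mild subtlety is ensuring that the pointwise variation $v(p,\slicebox)$ controls the change of critical values \emph{uniformly} over all simplices of $F^i$, which is precisely why the max over critical values is built into the definition of $v(F^i,\slicebox)$.
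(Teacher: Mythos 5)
Your proof is correct and follows essentially the same route as the paper: insert the center slice via two applications of the triangle inequality, then bound the two outer terms by $v(F^i,\slicebox)$ using the stability of $d_B$, since $F^i_{\slice}$ and $F^i_{\centerslice}$ are filtrations of the same complex whose critical values differ by at most the variation. Nothing is missing.
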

\begin{proof}
By triangle inequality of the bottleneck distance,
\[
d_B(F^1_{\slice},F^2_{\slice})\leq d_B(F^1_{\slice},F^1_{\centerslice})+d_B(F^1_{\centerslice},F^2_{\centerslice})+d_B(F^2_{\centerslice},F^2_{\slice}).
\]
Looking at the first term on the right, we have two filtrations of the
same simplicial complex, and every critical values changes by at most $v(F^1,\slicebox)$ by definition of the variation. Hence, by stability of the bottleneck
distance, $d_B(F^1_{\slice},F^1_{\centerslice})\leq v(F^1,\slicebox)$.
The same argument applies to the third term which proves the theorem.
\end{proof}
Note that the second term in the bound of Lemma~\ref{lem:upper_bound_lemma}
is the value at the center slice, which is already computed in the algorithm.
It remains to compute the variation of a bi-filtration within $\slicebox$.
This, in turn, we do by analyzing the variation of a point $p$ within $\slicebox$. We show

\begin{thm}\label{thm:corner-point}
For a box $\slicebox$, let $L_1,\ldots,L_4$ be the four slices
on the corners of $\slicebox$. Then
\[
v(p,\slicebox)=\max_{i=1,\ldots,4}|\weightedpush_p(\slice_i)-\weightedpush_p(\centerslice)|
\]
\end{thm}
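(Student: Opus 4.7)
The plan is to reduce the statement to a simple coordinate-wise monotonicity property of $\wpush_p$ in the slice parameters $(\sslope,\smu)$. By \cref{lem:wpush} and \cref{tbl:wpush-formulas}, for each fixed slice type $t\in T$ the weighted push $\wpush_p(\slice)$ is the pointwise maximum of the two affine functions of $(\sslope,\smu)$ corresponding to the ``$p$ above $\slice$'' and ``$p$ below $\slice$'' rows (the two formulas agree on the line and give the larger value away from it). Recalling that the paper already established that we may assume $p_x,p_y\geq 0$ for every critical value, I would prove that for every type, the function $(\sslope,\smu)\mapsto\wpush_p(\sslope,\smu)$ is non-decreasing in $\sslope$ and non-increasing in $\smu$ on its domain $[0,1]\times[0,\infty)$.

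I would verify this monotonicity type by type, reading off the eight entries of \cref{tbl:wpush-formulas}. The flat $y$-slice case is immediate: $\wpush_p=\max(p_y-\smu,\sslope p_x)$ is visibly non-decreasing in $\sslope$ and non-increasing in $\smu$, and the steep $x$-slice case $\wpush_p=\max(\sslope p_y, p_x-\smu)$ is analogous. The steep $y$-slice and flat $x$-slice cases require a small sign-chase: the affine piece $\sslope(p_y-\smu)$ (respectively $\sslope(p_x-\smu)$) fails to be monotone in $\sslope$ exactly when its coefficient is negative, but in that regime this piece is non-positive while the competing piece ($p_x$, resp.\ $p_y$) is non-negative, so the non-monotone piece is masked out of the maximum and $\wpush_p$ coincides with the (constant-in-$\sslope$) competitor. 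Hence the max is still monotone. I expect this case distinction to be the only real obstacle, and it is short and mechanical.

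Once monotonicity holds, the theorem follows immediately. For the axis-aligned box $\slicebox=[\sslope_{\min},\sslope_{\max}]\times[\smu_{\min},\smu_{\max}]$, monotonicity forces $\wpush_p$ to attain its maximum on $\slicebox$ at the corner $(\sslope_{\max},\smu_{\min})$ and its minimum at the corner $(\sslope_{\min},\smu_{\max})$. Setting $h(\slice):=\wpush_p(\slice)-\wpush_p(\centerslice)$, both $\sup_{\slicebox} h$ and $\inf_{\slicebox} h$ are therefore realized at corners of $\slicebox$. Since
\[
v(p,\slicebox)=\sup_{\slice\in\slicebox}|h(\slice)|=\max\bigl(\sup\nolimits_{\slicebox} h,\,-\inf\nolimits_{\slicebox} h\bigr),
\]
this supremum is attained at one of the four corner slices $\slice_1,\ldots,\slice_4$, which is exactly the claim.
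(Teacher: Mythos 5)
Your proof is correct, but it follows a genuinely different route from the paper's. The paper proves Theorem~\ref{thm:corner-point} via \cref{lem:hor_vert}: it restricts $D(\sslope,\smu)=|\wpush_p(\slice_{\sslope,\smu})-\wpush_p(\centerslice)|$ to horizontal and vertical segments of $\slicebox$, shows by a case analysis on the position of $p$ (above both, below both, or between the two extreme slices) that each such restriction is V-shaped, constant, or a concatenation of the two, hence maximized at the segment's endpoints, and then walks any parameter point to a corner along a rectilinear path. You instead establish a stronger structural fact about the push itself: writing $\wpush_p$ as the pointwise maximum of the two row-formulas of \cref{tbl:wpush-formulas} (which indeed agree on the slice and dominate appropriately off it), you show it is non-decreasing in $\sslope$ and non-increasing in $\smu$, so the maximum and minimum of $\wpush_p-\wpush_p(\centerslice)$ over the box sit at the two opposite corners $(\sslope_{\max},\smu_{\min})$ and $(\sslope_{\min},\smu_{\max})$, and the identity $\sup|h|=\max(\sup h,-\inf h)$ finishes the proof. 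Your argument buys a cleaner global picture (monotonicity of the weighted push in the parameters) and even shows that two specific corners suffice rather than four, at the price of relying on the normalization $p_x,p_y\geq 0$ (which you correctly invoke from the paper's translation-invariance reduction; the paper's V-shape argument does not need it). Two cosmetic points: in the steep $y$- and flat $x$-cases the competing piece $\sslope(p_y-\smu)$, resp.\ $\sslope(p_x-\smu)$, is bilinear in $(\sslope,\smu)$ rather than affine, so you should not call both pieces affine -- but since you only use monotonicity, verified directly with your sign-chase, nothing breaks; and your monotonicity claims implicitly use $\sslope\geq 0$, which holds on the parameter domain.
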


The theorem gives a direct algorithm to compute $v(p,\slicebox)$,
just by computing the weighted pushes at the four corners 
(in constant time) and return
the maximal difference to the weighted push in the center.
Doing so for every critical point of a bi-filtration $F$ yields
$v(F,\slicebox)$, and with Lemma~\ref{lem:upper_bound_lemma}
an algorithm for the \texttt{Bound} primitive that runs
in time proportional to the number of critical points of $F^1$ and $F^2$.
We refer to this bound as \emph{local linear bound} (where the term
``linear'' refers to the computational complexity), or as \emph{L-bound}.

The proof of the theorem is presented in Appendix~\ref{app:proof_hor_vert}.
The main idea is that the expression $|\wpush_p(\slice_{\sslope,\smu})-\wpush_p(\centerslice)|$ 
(with $\slice_{\sslope,\smu}$ the slice given by $(\sslope,\smu)$)
has no isolated local maxima, even for a fixed $\sslope$
or a fixed $\smu$. That implies that from any $(\sslope,\mu)$
in the box, there is rectilinear path to a corner on which the expression
above is non-decreasing.

\subparagraph{A coarser bound.}
We have derived a method to compute $v(p,\slicebox)$ exactly which takes
linear time. Alternatively, we can derive an upper bound as follows:

\begin{thm}\label{thm:local_constant_bound}
    Let $\slicebox$ be a box $[\sslope_{\min}, \sslope_{\max}] \times [\smu_{\min}, \smu_{\max}]$
with center $(\centersslope,\centersmu)$, width $\Delta \sslope = \sslope_{\max} - \sslope_{\min}$  and height $\Delta \smu = \smu_{\max} - \smu_{\min}$.
Then, for any point $p\in[0,X]\times[0,Y]$, $v(p,\slicebox)$ is at most
\[
    \begin{cases}
        \frac{1}{2}\left(\Delta \smu+ X \Delta \sslope \right) & \mbox{for flat $y$-slices}\\
        \frac{1}{2}\left(\centersslope \Delta \smu + (Y - \smu_{\min}) \Delta \sslope \right)\} & \mbox{for steep $y$-slices}\\
        \frac{1}{2}\left(\sslope_{c} \Delta \smu + (X - \smu_{\min}) \Delta \sslope \right) & \mbox{for flat $x$-slices}\\
        \frac{1}{2}\left(\Delta \smu+ Y \Delta \sslope \right) & \mbox{for steep $x$-slices}.
     \end{cases}
\]
\end{thm}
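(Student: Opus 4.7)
The plan is to exploit the fact that the weighted push, as a function of the slice parameters $(\sslope,\smu)$, is a pointwise maximum of the two affine expressions read off from \cref{tbl:wpush-formulas}. For each of the four types, the ``above'' and ``below'' formulas are both affine in $(\sslope,\smu)$ and both well-defined everywhere; the true weighted push is always the larger of the two, since the ``above'' regime is exactly where the corresponding formula dominates. Concretely one verifies
\begin{align*}
\text{flat $y$-slice:}&\quad \wpush_p(\sslope,\smu)=\max(p_y-\smu,\,\sslope p_x),\\
\text{steep $y$-slice:}&\quad \wpush_p(\sslope,\smu)=\max(\sslope(p_y-\smu),\,p_x),\\
\text{flat $x$-slice:}&\quad \wpush_p(\sslope,\smu)=\max(p_y,\,\sslope(p_x-\smu)),\\
\text{steep $x$-slice:}&\quad \wpush_p(\sslope,\smu)=\max(\sslope p_y,\,p_x-\smu).
\end{align*}

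Being a max of affine functions, $\wpush_p$ is convex and piecewise linear, with one-sided partial derivatives bounded at each point by those of the active branch. The next step is to read off uniform bounds on these derivatives over $p\in[0,X]\times[0,Y]$ and $(\sslope,\smu)\in\slicebox$. The $\partial_\smu$-branch-derivatives are always $0$ or $-1$ (for flat $y$ and steep $x$), or $0$ or $-\sslope$ (for steep $y$ and flat $x$); the $\partial_\sslope$-branch-derivatives are $0$ or one of $p_x,\,p_y,\,p_y-\smu,\,p_x-\smu$. The only subtlety is that the branch with derivative $p_y-\smu$ (steep $y$-slices) is only active when $\sslope(p_y-\smu)\geq p_x\geq 0$, which forces $p_y\geq\smu$; hence $|\partial_\sslope\wpush_p|\leq Y-\smu_{\min}$ throughout $\slicebox$. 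The flat $x$-slice case is analogous and yields $X-\smu_{\min}$.

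With these Lipschitz constants in hand, I would bound $|\wpush_p(\slice)-\wpush_p(\centerslice)|$ for an arbitrary $\slice=(\sslope,\smu)\in\slicebox$ by integrating along the axis-aligned polygonal path that first moves from $(\sslope_c,\smu_c)$ to $(\sslope_c,\smu)$ and then to $(\sslope,\smu)$. The two segments contribute at most $C_\smu|\smu-\smu_c|\leq C_\smu\,\Delta\smu/2$ and $C_\sslope|\sslope-\sslope_c|\leq C_\sslope\,\Delta\sslope/2$ respectively, where $C_\smu$ upper-bounds $|\partial_\smu\wpush_p|$ along the first (vertical) segment and $C_\sslope$ upper-bounds $|\partial_\sslope\wpush_p|$ along the second (horizontal) one. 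Plugging in the case-dependent constants from the previous paragraph and taking the supremum over $\slice\in\slicebox$ delivers the four inequalities claimed by the theorem.

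The main obstacle I anticipate is the choice of path in the steep $y$-slice and flat $x$-slice cases, which is precisely what produces the asymmetric constants in the theorem. Keeping $\sslope$ fixed at $\sslope_c$ along the first segment tightens the a-priori bound $|\partial_\smu\wpush_p|=\sslope\leq\sslope_{\max}$ to $\sslope_c$; keeping $\smu\in[\smu_{\min},\smu_{\max}]$ along the second segment gives $p_y-\smu\leq Y-\smu_{\min}$ (respectively $p_x-\smu\leq X-\smu_{\min}$). In the flat $y$-slice and steep $x$-slice cases the path choice is immaterial because $|\partial_\smu\wpush_p|\leq 1$ is already a universal constant.
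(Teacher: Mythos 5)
Your proof is correct, and it gets to the stated constants by a genuinely different mechanism than the paper. The paper proves, for each slice type, a pairwise estimate $|\wpush_p(\slice)-\wpush_p(\slice')|$ by an explicit case distinction on the position of $p$ (above both slices, below both, or in between), handling the mixed case with an auxiliary slice through $p$ and the triangle inequality; for steep $y$-slices it also routes through the same vertical-then-horizontal path in parameter space that you use, and then specializes $\slice'$ (or $\slice$) to the center of $\slicebox$. Your observation that the weighted push is exactly the maximum of the two affine expressions in Table~\ref{tbl:wpush-formulas} (e.g.\ $\wpush_p(\sslope,\smu)=\max(p_y-\smu,\,\sslope p_x)$ for flat $y$-slices) collapses that three-way positional case analysis into a single Lipschitz argument: a maximum of affine functions changes along a segment by no more than the largest active branch slope times the segment length, and you correctly isolate the one place where care is needed, namely that the $\sslope$-sensitive branch can only be active when $p$ lies weakly above the slice, which is what turns the naive bound $|p_y-\smu|$ (resp.\ $|p_x-\smu|$) into $Y-\smu_{\min}$ (resp.\ $X-\smu_{\min}$); keeping $\sslope=\sslope_c$ on the vertical leg likewise yields the factor $\sslope_c$ rather than $\sslope_{\max}$, matching the theorem. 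To make the derivative talk fully rigorous you would just note that along each leg the function is piecewise linear, so its total change is bounded by the maximal absolute slope of the active pieces; alternatively, when the $\sslope$-sensitive branch is never dominant the function is constant along the horizontal leg, and otherwise its slope there lies in $[0,Y-\smu_{\min}]$. What your approach buys is uniformity (one argument for all positions of $p$, with the four types differing only in the two Lipschitz constants); what the paper's buys is that it never needs any structural claim about $\wpush_p$ beyond the tabulated formulas, at the price of the more tedious intermediate-slice bookkeeping.
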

Importantly, the bound is independent of $p$, and hence
also an upper bound for $v(F,\slicebox)$ that can be computed in constant time;
we refer to it as \emph{local constant bound} or \emph{C-bound}.

The proof of Theorem~\ref{thm:local_constant_bound} is based on deriving
a bound of how much $\wpush_p(\slice)$ and $\wpush_p(\slice')$ can differ
for two slices $\slice=(\sslope,\smu)$ and $\slice' = (\sslope',\smu')$
in dependence of $|\sslope-\sslope'|$ and $|\smu-\smu'|$. This bound,
in turn, is derived separately for all four types of boxes
and involves an inner case distinction depending on whether $p$ lies
above both slices, below both slices, or in-between. In either case,
the claim of the statement follows from the bound by plugging in the
center slice of a box for either $\slice$ or $\slice'$. 
See Appendix~\ref{app:constant_bound} for the detailed proof.

\subparagraph{Termination and complexity.}
We show next that our absolute approximation algorithm terminates when realized
with either the local linear bound or the local constant bound.
In what follows, set $C:=\max\{X,Y\}$.
In the subdivision process, each box $\slicebox$ considered is assigned
a \emph{level}, where the level of the four initial boxes is $0$,
and the four sub-boxes obtained from a level-$k$-box have level $k+1$.
Since every box is subdivided by the center, we have
immediately that for a level-$k$-box, $\Delta\sslope=2^{-k}$,
and $\Delta\smu\leq C 2^{-k}$. Using these estimates in 
Theorem~\ref{thm:local_constant_bound},
we obtain
\begin{equation}
v(F^i,\slicebox)\leq \frac{1}{2}(C2^{-k}+C2^{-k})=C2^{-k}.
\label{eqn:glob_bound}
\end{equation}
for $i=1,2$ and every level-$k$-box $\slicebox$ considered by the algorithm.
Note that the local constant bound yields a bound on $v(F^i,\slicebox)$
that is not worse, and so does the local linear bound (which computes
$v(F^i,\slicebox)$ exactly). Hence we have

\begin{mylem}
Let $\slicebox$ be a level-$k$-box considered in the algorithm. 
Then, $\mu$, the result of the \texttt{Bound} primitive in the algorithm,
satisfies
\[\mu\leq d_B(F^1_{\centerslice},F^2_{\centerslice})+2C2^{-k}\]
both for the local linear and local constant bound.
\end{mylem}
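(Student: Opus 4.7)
The plan is to combine Lemma~\ref{lem:upper_bound_lemma} with the geometric estimate already derived in equation~(\ref{eqn:glob_bound}). Specifically, the \texttt{Bound} primitive, for both variants, returns a quantity of the form
\[
\mu = \tilde v(F^1,\slicebox) + d_B(F^1_{\centerslice},F^2_{\centerslice}) + \tilde v(F^2,\slicebox),
\]
where $\tilde v(F^i,\slicebox)$ equals $v(F^i,\slicebox)$ in the L-bound case (computed exactly via Theorem~\ref{thm:corner-point}), and equals the closed-form upper bound of Theorem~\ref{thm:local_constant_bound} in the C-bound case. In either case, by construction $\tilde v(F^i,\slicebox) \geq v(F^i,\slicebox)$, and by Lemma~\ref{lem:upper_bound_lemma} this quantity indeed upper-bounds $\sup_{\slice\in\slicebox} d_B(F^1_\slice,F^2_\slice)$.

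Next, I would use that a level-$k$-box has $\Delta\sslope = 2^{-k}$ (since $\sslope$ starts in $[0,1]$) and $\Delta\smu \leq C\,2^{-k}$ (since $\smu$ starts in $[0,X]$ or $[0,Y]$, and $C=\max\{X,Y\}$). Substituting these bounds into each of the four cases of Theorem~\ref{thm:local_constant_bound} and using $\sslope\leq 1$, $X,Y\leq C$, and $\smu_{\min}\geq 0$, every case evaluates to at most $\tfrac{1}{2}(C\,2^{-k}+C\,2^{-k}) = C\,2^{-k}$. This is precisely the estimate already recorded in~(\ref{eqn:glob_bound}), so I would simply invoke it: $\tilde v(F^i,\slicebox)\leq C\,2^{-k}$ holds for the C-bound directly, and for the L-bound because $v(F^i,\slicebox)$ can only be smaller than the C-bound, which is one of its upper bounds.

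Adding the two variation contributions to the central bottleneck distance yields
\[
\mu \leq d_B(F^1_{\centerslice},F^2_{\centerslice}) + C\,2^{-k} + C\,2^{-k} = d_B(F^1_{\centerslice},F^2_{\centerslice}) + 2C\,2^{-k},
\]
as claimed. There is no substantial obstacle: all heavy lifting has been done in Theorem~\ref{thm:local_constant_bound} and Lemma~\ref{lem:upper_bound_lemma}, and the only thing to check is the bookkeeping that both implementations of \texttt{Bound} are dominated by the constant bound on a level-$k$-box. The mildest subtlety worth mentioning explicitly is that the L-bound, although it computes $v(F^i,\slicebox)$ exactly, still enjoys the same asymptotic $C\,2^{-k}$ ceiling because Theorem~\ref{thm:local_constant_bound} provides a \emph{point-independent} upper bound on the variation.
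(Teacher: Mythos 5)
Your proof is correct and follows the same route as the paper: plug the level-$k$ size estimates $\Delta\sslope = 2^{-k}$ and $\Delta\smu \leq C2^{-k}$ into Theorem~\ref{thm:local_constant_bound} to get the $C2^{-k}$ ceiling of~(\ref{eqn:glob_bound}), note that the exact variation computed by the L-bound can only be smaller, and combine with Lemma~\ref{lem:upper_bound_lemma}. Nothing is missing.
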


It follows easily that if $2C2^{-k}\leq\eps$,
or equivalently $2^k\geq \frac{2C}{\eps}$, a box is not further subdivided
in the algorithm (see Lemma~\ref{lem:k_bound} in Appendix~\ref{app:complexity}). 
Moreover, if the maximal subdivision depth is $k$, the algorithm
visits $O(4^k)$ boxes, and requires $O(n^3)$ time per box because
of the computation of two persistence diagrams and their bottleneck distance.
Combining these results leads to the complexity bound.

\begin{thm}\label{thm:absolute_approx_complexity}
Our algorithm to compute an absolute $\eps$-approximation terminates in 
\[
O(n^3 \left(\frac{C}{\eps}\right)^2)
\]
steps in the worst case (both for the linear and constant bound).
\end{thm}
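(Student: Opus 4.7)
The plan is to combine the recently proved lemma bounding $\mu$ at level $k$ with the simple geometric growth of the quad-tree, and then multiply by the per-box cost.

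First, I would formalize the termination criterion. By construction, $\rho$ is non-decreasing during the run and after processing the box $\slicebox$ with center slice $\centerslice$ we have $\rho\geq d_B(F^1_{\centerslice},F^2_{\centerslice})$. Combined with the preceding lemma, the bound returned on a level-$k$ box satisfies
\[
\mu \;\leq\; d_B(F^1_{\centerslice},F^2_{\centerslice}) + 2C2^{-k} \;\leq\; \rho + 2C2^{-k}.
\]
Hence, as soon as $2C2^{-k}\leq\eps$, i.e. $k\geq k^{\star}:=\lceil\log_2(2C/\eps)\rceil$, the subdivision test $\mu>\rho+\eps$ fails and $\slicebox$ is not split. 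This is the content of Lemma~\ref{lem:k_bound} invoked from the appendix, but I would include the short verification above to make the dependence explicit.

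Next, I would count boxes. Each of the four initial items is the root of a quad-tree whose depth is at most $k^{\star}$ by the previous step. A quad-tree of depth $k$ contains at most $\sum_{j=0}^{k}4^{j}=O(4^k)$ nodes, so the total number of boxes visited is
\[
4\cdot O\!\left(4^{k^{\star}}\right) \;=\; O\!\left(\left(\tfrac{2C}{\eps}\right)^{2}\right) \;=\; O\!\left(\left(\tfrac{C}{\eps}\right)^{2}\right).
\]

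For each box, the dominant cost is one call to \texttt{Eval} plus one call to \texttt{Bound}. \texttt{Eval} computes the weighted pushes of all $n$ critical values in $O(n)$ time by Lemma~\ref{lem:wpush}, then one persistence diagram computation per filtration and one bottleneck distance, all of which run in $O(n^{3})$ in the worst case with the chosen implementations. \texttt{Bound} runs in $O(n)$ for the L-bound and in $O(1)$ for the C-bound, so it is absorbed. Multiplying the per-box cost $O(n^{3})$ by the $O((C/\eps)^{2})$ boxes yields the stated overall bound.

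The only slightly delicate step is the termination argument: one must note that it is the \emph{value returned at the center}, not the a~priori unknown matching distance, that feeds $\rho$, so the inequality $d_B(F^1_{\centerslice},F^2_{\centerslice})\leq\rho$ is available at the moment the split decision is taken (since \texttt{Eval} is called and $\rho$ is updated before \texttt{Bound} is queried). Once this is spelled out, the rest is a routine combination of the lemma, the geometric series, and the per-box complexity.
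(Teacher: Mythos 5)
Your proof is correct and follows essentially the same route as the paper: you invoke the same level bound $k^\star=\lceil\log_2(2C/\eps)\rceil$ from Lemma~\ref{lem:k_bound}, count $O(4^{k^\star})=O((C/\eps)^2)$ boxes via the geometric series, and multiply by the dominant $O(n^3)$ per-box cost of persistence computation. The extra remark you add about \texttt{Eval} being called (and $\rho$ updated) before \texttt{Bound} is queried is a useful clarification, but it is also implicit in the paper's proof of Lemma~\ref{lem:k_bound}.
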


See again Appendix~\ref{app:complexity} for more details. In there,
we also derive a similar bound for the variant of computing a relative
approximation. 

\begin{thm}\label{thm:relative_approx_complexity}
Our algorithm to compute a relative $(1+\eps)$-approximation terminates in
\[
O(n^3\left(\frac{C(1+\eps)}{\eps\dmatch(F^1,F^2)}\right)^2)
\]
steps in the worst case if $\dmatch(F^1,F^2)>0$. 
\end{thm}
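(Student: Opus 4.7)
My plan is to mirror the complexity argument for \cref{thm:absolute_approx_complexity}: bound the maximum subdivision depth $K$, conclude that the algorithm visits $O(4^K)$ boxes, and multiply by the $O(n^3)$ cost per box. The first step is to rewrite the stopping condition for the relative variant. A level-$k$ box is subdivided exactly when $\mu > (1+\eps)\rho$. Using $\mu \leq d_B(F^1_{\centerslice},F^2_{\centerslice}) + 2C\cdot 2^{-k}$ (which combines \cref{lem:upper_bound_lemma} with the estimate $v(F^i,\slicebox)\leq C\cdot 2^{-k}$ from (\ref{eqn:glob_bound})) together with the fact that $\rho$ is updated to include $d_B(F^1_{\centerslice},F^2_{\centerslice})$ immediately before the check, we get $\mu \leq \rho + 2C\cdot 2^{-k}$. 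Substituting into the subdivision condition yields
\[
\rho < \frac{2C\cdot 2^{-k}}{\eps}
\]
at the moment of every subdivision at level $k$. Unlike the absolute case, this inequality alone does not rule out arbitrarily deep subdivision, since $\rho$ could still be small.

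The additional ingredient needed is a lower bound on how quickly $\rho$ grows with the level. Because the BFS traversal evaluates all level-$k$ centers before any level-$(k+1)$ box, the set of evaluated centers after level $k$ forms a grid of spacing $O(2^{-k}) \times O(C\cdot 2^{-k})$ in each of the four parameter rectangles. Applying \cref{lem:upper_bound_lemma} to a box of vanishing width, with the variation controlled by \cref{thm:local_constant_bound}, shows that $\slice \mapsto d_B(F^1_\slice,F^2_\slice)$ is Lipschitz in the slice parameters with a constant of order $C$. Consequently, the maximum of this function over the level-$k$ grid approximates $\dmatch(F^1,F^2)$ up to $O(C\cdot 2^{-k})$, so by the time all level-$k$ boxes have been processed,
\[
\rho \geq \dmatch(F^1,F^2) - O(C\cdot 2^{-k}).
\]
In particular, once $k \geq K_1 := \lceil \log_2( C(1+\eps)/(\eps\,\dmatch(F^1,F^2)) ) \rceil$, we have $\rho \geq \dmatch(F^1,F^2)/(1+\eps)$.

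Combining the two observations, any subdivision at a level $k \geq K_1$ would require $\dmatch(F^1,F^2)/(1+\eps) \leq \rho < 2C\cdot 2^{-k}/\eps$, forcing $2^k = O(C(1+\eps)/(\eps\,\dmatch(F^1,F^2)))$. Hence the maximum subdivision depth is $K = O(\log(C(1+\eps)/(\eps\,\dmatch(F^1,F^2))))$, the algorithm visits $O(4^K) = O\!\left((C(1+\eps)/(\eps\,\dmatch(F^1,F^2)))^2\right)$ boxes, and the overall running time is $O(n^3\cdot 4^K)$, matching the stated bound. I expect the main obstacle to be the quantitative Lipschitz estimate used in the second paragraph: one must argue uniformly across each of the four parameter rectangles that small changes in $(\sslope,\smu)$ induce only $O(C)$-scaled changes in the weighted bottleneck distance, which requires some care at the degenerate boundary $\sslope\to 0$ where the push formulas reduce to pure horizontal or vertical projections and the two sub-cases (``$p$ above $\slice$'' vs.\ ``$p$ below $\slice$'') coincide.
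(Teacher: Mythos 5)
Your overall scaffolding is the same as the paper's (rewrite the subdivision test as $\rho<2C\,2^{-k}/\eps$, show that $\rho\geq\dmatch(F^1,F^2)/(1+\eps)$ once the level is deep enough, conclude a depth bound, count $O(4^K)$ boxes at $O(n^3)$ each), but the step where you lower-bound $\rho$ has a genuine gap. You assert that after level $k$ the evaluated centers ``form a grid of spacing $O(2^{-k})\times O(C\,2^{-k})$'' in each parameter rectangle. That is false in general: the algorithm only evaluates centers of boxes that were actually created, and any subtree whose root was pruned (i.e.\ satisfied $\mu\leq(1+\eps)\rho$) contributes no further centers. In particular, the box containing the optimal slice $\slice^\ast$ may be pruned at a shallow level, in which case no evaluated center at level $k$ need lie anywhere near $\slice^\ast$, and your grid-plus-Lipschitz argument does not yield $\rho\geq\dmatch-O(C\,2^{-k})$. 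The repair is exactly the case distinction in the paper's Lemma~\ref{lem:rho_close_to_dmatch}: either the box containing $\slice^\ast$ was pruned at some level $<k$, and then the pruning certificate itself gives $\dmatch\leq\mu\leq(1+\eps)\rho'\leq(1+\eps)\rho$ (an even stronger conclusion than your grid estimate); or its level-$(k-1)$ ancestor was subdivided, and since BFS evaluates all level-$(k-1)$ centers before any level-$k$ box is popped, $\dmatch\leq\mu\leq d_B(F^1_{\centerslice},F^2_{\centerslice})+2C\,2^{-(k-1)}\leq\rho+2C\,2^{-(k-1)}$. (Note also the timing: to forbid subdivision of a level-$k$ box you need the bound on $\rho$ at the moment that box is considered, i.e.\ after level $k-1$ is finished, which is why the paper's threshold carries a ``$+1$''; this is absorbed in the big-$O$ but your phrasing ``by the time all level-$k$ boxes have been processed'' conflates it.)

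A secondary remark: the ``main obstacle'' you anticipate --- a uniform Lipschitz estimate for $\slice\mapsto d_B(F^1_\slice,F^2_\slice)$, with care at $\sslope\to 0$ --- is not actually where the difficulty lies. Lemma~\ref{lem:upper_bound_lemma} together with the estimate (\ref{eqn:glob_bound}) already gives $|d_B(F^1_\slice,F^2_\slice)-d_B(F^1_{\centerslice},F^2_{\centerslice})|\leq v(F^1,\slicebox)+v(F^2,\slicebox)\leq 2C\,2^{-k}$ for any slice in a level-$k$ box, uniformly over the closed parameter rectangles including the degenerate boundary, because the push formulas of Table~\ref{tbl:wpush-formulas} are polynomial in $(\sslope,\smu)$. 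Once you replace the grid claim by the two-case argument above, the rest of your proof goes through and matches the paper's bound.
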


\section{Experiments}
\label{sec:experiments}

\subparagraph{Experimental setup.}
Our experiments were performed on a workstation with an Intel(R) Xeon(R) CPU E5-1650 v3 CPU (6 cores,
3.5GHz) and 64 GB RAM, running GNU/Linux (Ubuntu 16.04.5). 
The code was written in C++ and compiled with gcc-8.1.0.

We generated two datasets, which we call \dsGH and \dsED,
following \cite{italian} (unfortunately,
we were unable to get either the code or the data used by the authors).
Each of the 70 files in the datasets is a lower-star bi-filtration of a triangular mesh (2-dimensional complex),
representing a 3D shape.
We also generated dataset \dsR of larger random bi-filtrations with up to {2,000} vertices.
A more detailed description of the datasets can be found in \cref{sec:datasets}.
In all our experiments we used persistence diagrams in dimension 0.
In the experiments with the datasets \dsGH and \dsED, we computed all pairwise distances;
in the experiments with \dsR we computed distances only between bi-filtration with the same number of vertices.
We used relative error threshold, which we call $\eps$ in this section
throughout (i.e., we always compute $(1+\eps)$-approximation).

\subparagraph{Comparison of different bounds.}
First, we experimentally compare the performance
of our algorithm with the L-bound from Theorem~\ref{thm:corner-point} and with
the C-bound from Theorem~\ref{thm:local_constant_bound}.
Obviously, the L-bound is sharper, so it allows us to subdivide
fewer boxes and in this sense is more efficient.
However, it is not a priori clear that the L-bound
is preferable, since its computation takes $O(n)$ time per box, in contrast
to the constant bound.

\ignore{
There are two natural optimizations of the algorithm with the L-bound.
Note that the value returned by \texttt{Bound} is used only to decide
whether we have to subdivide the current box: we subdivide, if
the bound is less than $(1+ \eps)\rho$, where $\rho$ is the lower bound,
the maximum of all $d_B(F^1_L, F^2_L)$ we have computed so far.
The first optimization is to compute the C-bound, and, if it is less than $(1+\eps)\rho$,
stop subdividing (the L-bound cannot be greater than the C-bound).
Otherwise we range over all critical values $p$, computing their variation.
As second optimization, we stop once we encounter a point $p$ whose variation
is large enough to ensure that the L-bound
will be greater than $(1+\eps)\rho$.
}

Secondly, we compare the local bounds L and C with the bound
provided by \cref{eqn:glob_bound}, which we call the \emph{global} bound or
G-bound because it only depends on the size of the box.
A bound of that sort (worse than \cref{eqn:glob_bound})
is used in~\cite{italian}.

Recall that the dominating step in the complexity analysis (and in practice) is the computation of persistence,
and we perform two such computations when we call the \texttt{Eval} primitive.
Therefore we are interested in the number of calls of \texttt{Eval};
for brevity, we refer to this number simply as the number of \textit{calls}.

In \cref{tbl:avg_calls_time}, we give the average
number of calls and timings for different datasets and values of $\eps$.
Actually, the variance behind the average in these tables is large,
so we additionally provide \cref{tbl:call_ratio}
and  \cref{tbl:time_ratio},
where we report the average, maximal, and minimal ratios
of the number of calls and time that the algorithm needs with different bounds.
For instance, the third line of \cref{tbl:call_ratio} shows that for all pairs 
from \dsED that we tested with relative error $\eps = 0.5$, switching from the local constant to the local linear
bound reduces the number of calls by a factor between 1.78 and 4.92.

\cref{tbl:time_ratio} shows that, as expected, the C-bound always performs better
than the G-bound, with the average speed-up around $2$.
The L-bound brings an additional speed-up by a factor of $1.5$-$2$ in terms of the running time;
the number of calls is reduced more significantly, by a factor of $3$.
However, the second from the right column of \cref{tbl:time_ratio} shows that the running
time can sometimes moderately increase, if we switch to the L-bound from the C-bound.
If we compare the G-bound with the L-bound directly (these numbers
are not present in \cref{tbl:time_ratio}), the best speed-up factor is $15.6$,
the worst one is $1.14$, and the average is between $3$ and $8$, depending on the dataset.

\begin{table}\centering
    \ra{1.2}
\begin{tabular}{@{}rrrrrrr@{}}\toprule
    & \multicolumn{3}{c}{\#Calls} & \multicolumn{3}{c}{Time (min)} \\
                          \cmidrule{2-4}  \cmidrule{5-7} 
                          & L & C & G & L & C & G  \\ \midrule
\dsGH, $\eps = 0.5$  & 938 & 2502 & 11082   & 2.08 & 3.67 & 18.78 \\
\dsED, $\eps = 0.1$  & 1455 & 3920 & 27529  & 2.58 & 3.26 & 25.96 \\
\dsED, $\eps = 0.5$  & 169 & 531 & 2112     & 0.28 & 0.42 & 1.67 \\
 \bottomrule
\end{tabular}
    \caption{Average number of calls and average running time with the L-, C- and G-bounds for different datasets and relative error $\eps$.}
\label{tbl:avg_calls_time}
\end{table}

\begin{table}\centering
    \ra{1.2}
\begin{tabular}{@{}rrrrrrr@{}}\toprule
    & \multicolumn{3}{c}{ \#Calls: G / C} & \multicolumn{3}{c}{ \#Calls: C / L} \\
                          \cmidrule{2-4} \cmidrule{5-7} 
 Dataset, $\eps$     & Avg & Min & Max     & Avg & Min & Max \\ \midrule
\dsGH, $\eps = 0.5$  &    1.80 &   1.21 &   3.28 &   3.10 & 1.51 & 7.02 \\
\dsED, $\eps = 0.1$  &    2.93 &   1.43 &   5.07 &   3.00     & 1.88 & 6.82 \\
\dsED, $\eps = 0.5$  &    1.94 &   1.17 &   2.81 &   3.29 &  1.78 & 4.92\\
\dsR, $\eps = 0.1$  &    6.06 &   2.76 &  10.58 &   2.08 &   1.91 &   2.47 \\ 
\bottomrule
\end{tabular}
    \caption{Comparison of number of calls between the global, local constant, and local linear bounds.
    G / C denotes the ratio of the G-bound compared with the C-bound;
    C / L denotes the ratio of the C-bound compared with the L-bound.}
\label{tbl:call_ratio}
\end{table}

\begin{table}\centering
    \ra{1.2}
\begin{tabular}{@{}rrrrrrr@{}}\toprule
                          & \multicolumn{3}{c}{Time: G / C} & \multicolumn{3}{c}{Time: C / L}\\
                          \cmidrule{2-7}
Dataset, $\eps$      & Avg & Min & Max        & Avg & Min & Max  \\ \midrule
\dsGH, $\eps = 0.5$  &    1.66 &   1.00 &   3.18 &    2.03 & 0.75 & 6.32 \\ 
\dsED, $\eps = 0.1$  &    3.12 &   1.44 &   5.21 &    1.64 &  0.81 & 3.40 \\
\dsED, $\eps = 0.5$  &    2.08 &   1.07 &   3.38 &    1.59 & 0.92 & 3.89 \\
\dsR, $\eps = 0.1$  &    5.73 &   2.83 &  10.67 &   1.93 &   1.66 &   2.20 \\
\bottomrule
\end{tabular}
    \caption{Comparison of running time between the global, local constant, and local linear bounds.}
\label{tbl:time_ratio}
\end{table}

\subparagraph{Breadth-first search, depth-first search, and error decay.}
\label{para:comp_error_decay}
In the formulation of our algorithm we used a FIFO-queue. This means
that we traverse the quad-tree in breadth-first order
(i.e., level by level).
Other traversal strategies are possible, for instance
a depth-first order, or a greedy algorithm
where boxes with large bottleneck distance at the center are picked first.
We experimented with these variants
and found no significant difference.  The explanation is that a good lower bound
is achieved after a small number of iterations in every variant, and the remaining part of the computations
is mostly to certify the answer.

A variant of our algorithm is that instead of $\eps$, we are given a time
budget and want to compute the best possible (relative) approximation
in this time limit. In such a case, we propose to traverse the quad-tree
by always subdividing the box with the largest upper bound 
(i.e., the output of the \texttt{Bound} primitive).
When the time is over, it suffices to peek at the top of the priority
queue to get the current upper bound, and we can output the lower bound $\rho$
and the relative error that we can guarantee at this moment.
It is instructive to plot how the relative error decreases as the algorithm runs;see \cref{fig:error-time}.
For instance, we can see that it takes approximately 3.5 times longer to bring the relative error below 0.1 than below 0.2,
if we use the constant bound.
This agrees with the complexity estimate in Theorem~\ref{thm:relative_approx_complexity}.

One detail in this plot is relevant for the experiments of the previous subparagraph.
If we choose a relative error $\eps_0$ and draw a horizontal line
$\eps = \eps_0$ in \cref{fig:error-time} until it intersects the plotted curves,
then the $x$-coordinate  of the intersection
is
the time that our algorithm needs to guarantee a {${1+\eps_0}$} approximation with the corresponding bound.
We can see that the difference between
the time needed with the global bound and the time needed with the constant bound
is not large for some values of $\eps_0$, but a small change of $\eps_0$
can rapidly increase it.  Clearly, this is highly input-specific, and this partially
explains the large variation in the improvement ratios that we observed above, when we ran experiments
with fixed $\eps$.

We provide additional experimental results in the \cref{sec:more_experiments}.
It contains the reduction rate (the measure used in \cite{italian}) for the experiments
of this section, scaling results on the dataset \dsR, and heatmap visualization of $d_B(F^1_\slice, F^2_\slice)$.

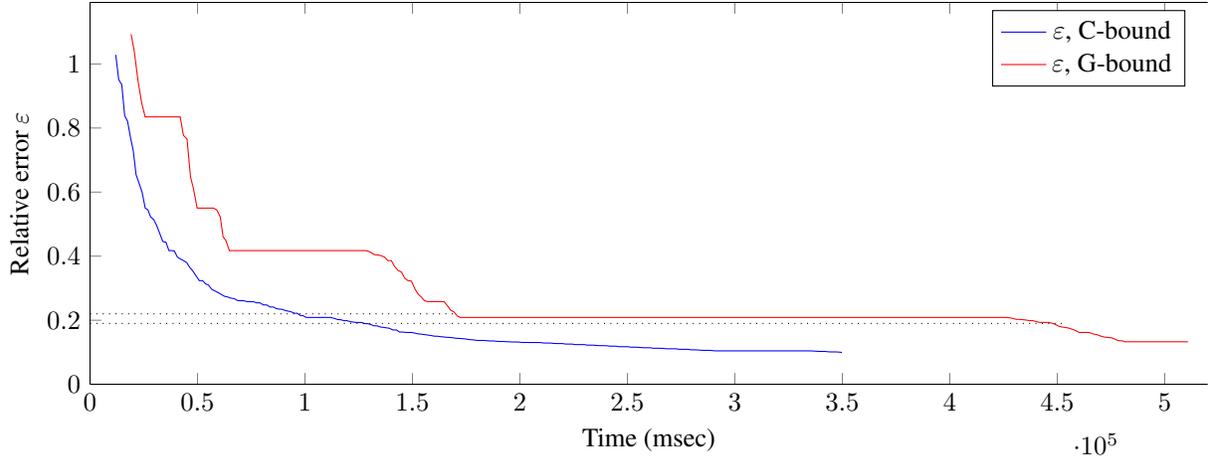
\begin{figure}[t]
\centering
\begin{tikzpicture}
\pgfplotsset{
    scale only axis,
    xmin=0, xmax=520000
}
\begin{axis}[ 
        axis y line* = left,
                height=2.00in,
                width=.9\textwidth,
                xlabel={Time (msec)},
                ytick distance = 0.2,
                ylabel={Relative error $\eps$},
            ]
    \addplot    +[mark=none] table[x=time, y=error, col sep = tab]  {data/ub_refined_second_cat0_david8_0.1.txt}; \label{plot_two}
    \addlegendentry{$\eps$, C-bound};
    \addplot    +[mark=none] table[x=time, y=error, col sep = tab]  {data/ub_grob_second_cat0_david8_0.1.txt}; \label{plot_three}
    \addlegendentry{$\eps$, G-bound};
    \addplot +[mark=none, black, dotted] coordinates {(0,0.19) (452095,0.19)};
    \addplot +[mark=none, black, dotted] coordinates {(0,0.22) (170095,0.22)};
\end{axis}
\end{tikzpicture}
\caption{Error decay with time for the C- and G-bounds.}
\label{fig:error-time}
\end{figure}

\section{Conclusion}
\label{sec:conclusion}
We presented an algorithm for the matching distance
that keeps subdividing boxes until a sufficiently close approximation
of the matching distance can be guaranteed.
This high-level description also applies to the previous
approach by Biasotti et al., which raises the question of how the approaches
compare in the details.
Instead of pointing out similarities and differences in the technical part,
we give a detailed discussion on this topic 
in Appendix~\ref{app:comparison}.

\medskip

We have restricted to the case of bi-filtrations in this work.
Generalizations in several directions are possible. First of all,
instead of bi-filtrations, our algorithm works the same
when the input is a pair of presentations of persistence modules~\cite{lw-computing,klo-exact,bbk-computing}.
Since a minimal presentation of a bi-filtration
can be of much smaller size than the bi-filtration itself, and its computation
is feasible~\cite{lw-computing,dx-generalized},
switching to a minimal presentation will most likely increase the
performance further. We plan to investigate this in further work.
Moreover, the case of $k$-critical bi-filtrations can be handled
with our methodology, just by defining the push of a simplex as the 
minimal push over all its critical values (Appendix~\ref{app:k_critical}).
Our approach can also be combined with \emph{barcode templates}
as introduced in the \rivet library~\cite{lw-interactive}.
Finally, an extension of our approach to $3$ and more parameters should be
possible in principle, but we point out that the space of affine lines
through $\R^d$ is $2(d-1)$ dimensional. Hence, already the next case
of tri-filtrations requires a subdivision in $\R^4$, and it is questionable
whether reasonably-sized instances could be handled by an extended
algorithm.

As the experiments show, in some cases the cost of computing the L-bound 
makes the variant with the {C-bound} faster.
However, parallelization of the L-bound
is trivial, and we believe that on larger instances it will make the L-bound the best choice.

Finally, since the matching distance can be computed exactly in polynomial
time~\cite{klo-exact}, the question is whether there is a practical
algorithm for this exact computation. Our current implementation can serve
as a base-line for a comparison between exact and approximate version
of matching distance computations that hopefully lead to further
improvements for the computation of the matching distance.

\bibliography{bib}

\begin{thebibliography}{22}
\providecommand{\natexlab}[1]{#1}
\providecommand{\url}[1]{\texttt{#1}}
\expandafter\ifx\csname urlstyle\endcsname\relax
  \providecommand{\doi}[1]{doi: #1}\else
  \providecommand{\doi}{doi: \begingroup \urlstyle{rm}\Url}\fi

\bibitem[Bauer et~al.(2017)Bauer, Kerber, Reininghaus, and Wagner]{bkrw-phat}
Ulrich Bauer, Michael Kerber, Jan Reininghaus, and Hubert Wagner.
\newblock Phat - persistent homology algorithms toolbox.
\newblock \emph{J. Symb. Comput.}, 78:\penalty0 76--90, 2017.
\newblock \doi{10.1016/j.jsc.2016.03.008}.
\newblock URL \url{https://doi.org/10.1016/j.jsc.2016.03.008}.

\bibitem[Biasotti et~al.(2011)Biasotti, Cerri, Frosini, and Giorgi]{italian}
Silvia Biasotti, Andrea Cerri, Patrizio Frosini, and Daniela Giorgi.
\newblock A new algorithm for computing the 2-dimensional matching distance
  between size functions.
\newblock \emph{Pattern Recognition Letters}, 32\penalty0 (14):\penalty0
  1735--1746, 2011.

\bibitem[Bjerkevik et~al.()Bjerkevik, Botnan, and Kerber]{bbk-computing}
H{\aa}vard Bjerkevik, Magnus Botnan, and Michael Kerber.
\newblock Computing the interleaving distance is {NP}-hard.
\newblock arXiv:1811.09165.

\bibitem[Bronstein et~al.(2006)Bronstein, Bronstein, and Kimmel]{nrwb}
Alexander~M Bronstein, Michael~M Bronstein, and Ron Kimmel.
\newblock Efficient computation of isometry-invariant distances between
  surfaces.
\newblock \emph{SIAM Journal on Scientific Computing}, 28\penalty0
  (5):\penalty0 1812--1836, 2006.

\bibitem[Carlsson(2009)]{carlsson-survey}
G.~Carlsson.
\newblock Topology and data.
\newblock \emph{Bulletin of the AMS}, 46:\penalty0 255--308, 2009.

\bibitem[Carlsson and Zomorodian(2009)]{cz-theory}
G.~Carlsson and A.~Zomorodian.
\newblock The theory of multidimensional persistence.
\newblock \emph{Discrete \& Computational Geometry}, 42\penalty0 (1):\penalty0
  71--93, 2009.
\newblock ISSN 0179-5376.
\newblock \doi{10.1007/s00454-009-9176-0}.
\newblock URL \url{http://dx.doi.org/10.1007/s00454-009-9176-0}.

\bibitem[Cerri et~al.(2013)Cerri, Di~Fabio, Ferri, Frosini, and
  Landi]{cerri2013betti}
A.~Cerri, B.~Di~Fabio, M.~Ferri, P.~Frosini, and C.~Landi.
\newblock Betti numbers in multidimensional persistent homology are stable
  functions.
\newblock \emph{Mathematical Methods in the Applied Sciences}, 36\penalty0
  (12):\penalty0 1543--1557, 2013.

\bibitem[Cohen-Steiner et~al.(2007)Cohen-Steiner, Edelsbrunner, and
  Harer]{ceh-stability}
D.~Cohen-Steiner, H.~Edelsbrunner, and J.~Harer.
\newblock Stability of persistence diagrams.
\newblock \emph{Discrete \& Computational Geometry}, 37:\penalty0 103--120,
  2007.
\newblock ISSN 0179-5376.

\bibitem[Dey and Xin()]{dx-generalized}
Tamal Dey and Cheng Xin.
\newblock Generalized persistence algorithm for decomposing multi-parameter
  persistence modules.
\newblock arXiv:1904.03766.

\bibitem[Edelsbrunner and Harer(2010)]{eh-computational}
H.~Edelsbrunner and J.~Harer.
\newblock \emph{Computational Topology. An Introduction.}
\newblock American Mathematical Society, 2010.
\newblock ISBN 0-8218-4925-5.

\bibitem[Edelsbrunner et~al.(2002)Edelsbrunner, Letscher, and
  Zomorodian]{elz-topological}
H.~Edelsbrunner, D.~Letscher, and A.~Zomorodian.
\newblock Topological persistence and simplification.
\newblock \emph{Discrete \& Computational Geometry}, 28\penalty0 (4):\penalty0
  511--533, 2002.
\newblock ISSN 01795376.
\newblock \doi{10.1007/s00454-002-2885-2}.

\bibitem[Hilaga et~al.(2001)Hilaga, Shinagawa, Kohmura, and
  Kunii]{hilaga2001topology}
Masaki Hilaga, Yoshihisa Shinagawa, Taku Kohmura, and Tosiyasu~L Kunii.
\newblock Topology matching for fully automatic similarity estimation of 3d
  shapes.
\newblock In \emph{Proceedings of the 28th annual conference on Computer
  graphics and interactive techniques}, pages 203--212. ACM, 2001.

\bibitem[Keller et~al.(2018)Keller, Lesnick, and Willke]{keller2018persistent}
Bryn Keller, Michael Lesnick, and Theodore~L Willke.
\newblock Persistent homology for virtual screening.
\newblock 2018.

\bibitem[Kerber et~al.(2017)Kerber, Morozov, and Nigmetov]{kmn-geometry}
M.~Kerber, D.~Morozov, and A.~Nigmetov.
\newblock Geometry helps to compare persistence diagrams.
\newblock \emph{Journal of Experimental Algorithms}, 22:\penalty0
  1.4:1--1.4:20, September 2017.
\newblock ISSN 1084-6654.

\bibitem[Kerber et~al.(2019)Kerber, Lesnick, and Oudot]{klo-exact}
Michael Kerber, Michael Lesnick, and Steve Oudot.
\newblock Exact computation of the matching distance on 2-parameter persistence
  modules.
\newblock In \emph{35th International Symposium on Computational Geometry (SoCG
  2019)}, pages 46:1--46:15, 2019.

\bibitem[Landi(2018)]{landi2018rank}
Claudia Landi.
\newblock The rank invariant stability via interleavings.
\newblock In \emph{Research in Computational Topology}, pages 1--10. Springer,
  2018.

\bibitem[Lesnick and Wright()]{lw-computing}
Michael Lesnick and Matthew Wright.
\newblock Computing minimal presentations and bigraded betti numbers of
  $2$-parameter persistent homology.
\newblock arXiv:1902.05708.

\bibitem[Lesnick and Wright(2015)]{lw-interactive}
Michael Lesnick and Matthew Wright.
\newblock Interactive visualization of {2-D} persistence modules persistence
  modules.
\newblock \emph{arXiv:1512.00180}, 2015.

\bibitem[Milosavljevic et~al.(2011)Milosavljevic, Morozov, and
  Skraba]{mms-zigzag}
N.~Milosavljevic, D.~Morozov, and P.~Skraba.
\newblock Zigzag persistent homology in matrix multiplication time.
\newblock In \emph{{ACM} Symposium on Computational Geometry (SoCG)}, pages
  216--225, 2011.

\bibitem[Oudot(2015)]{oudot-book}
S.~Oudot.
\newblock \emph{Persistence theory: From Quiver Representation to Data
  Analysis}, volume 209 of \emph{Mathematical Surveys and Monographs}.
\newblock American Mathematical Society, 2015.

\bibitem[Sun et~al.(2009)Sun, Ovsjanikov, and Guibas]{heat-kernel}
Jian Sun, Maks Ovsjanikov, and Leonidas Guibas.
\newblock A concise and provably informative multi-scale signature based on
  heat diffusion.
\newblock In \emph{Proceedings of the Symposium on Geometry Processing}, SGP
  '09, pages 1383--1392, Aire-la-Ville, Switzerland, Switzerland, 2009.
  Eurographics Association.
\newblock URL \url{http://dl.acm.org/citation.cfm?id=1735603.1735621}.

\bibitem[Tralie(2018)]{pyhks-repo}
Chris Tralie.
\newblock pyhks.
\newblock \url{https://github.com/ctralie/pyhks}, 2018.

\end{thebibliography}

\begin{appendix}

\section{Proof of correctness}
\label{app:correctness}
We argue that the approximation algorithm given in Section~\ref{sec:approx_alg}
indeed yields a $\delta$ such that
\[
\dmatch(F^1,F^2)-\eps \leq \delta\leq \dmatch(F^1,F^2).
\]
Firstly, the value $\rho$ is non-decreasing during the algorithm.
Call a box \emph{terminal} if it does not get subdivided during
the algorithm (in other words, considering the set of boxes processed
in the algorithm as a \emph{quad-tree}, terminal boxes correspond
to leaves in the quad-tree). By construction, we have that
for each terminal box and each slice $\slice$ specified by the box
\[
d_B(F^1_\slice,F^2_\slice)\leq\mu\leq \rho+\eps\leq \delta+\eps
\]
Moreover, the terminal boxes form a cover of the initial boxes;
more precisely, the union of all terminal boxes with type $y$-steep
is equal to $[0,1]\times[0,Y]$, and similarly for the other three types.
Note that the initial boxes correspond to the sets $\slices_1,\ldots,\slices_4$
from (\ref{eqn:matching_compact}). Hence, writing $\mathcal{T}$
for the set of all slices contained in a terminal box,
we have that $\mathcal{T}=\slices_1\cup\ldots\cup\slices_4$ and it follows that
\begin{align*}
\dmatch(F^1,F^2)=&\sup_{\slice\in\slices_1\cup\ldots\cup\slices_4} d_B(F^1_\slice,F^2_\slice)\\
=&\sup_{\slice\in\mathcal{T}} d_B(F^1_\slice,F^2_\slice)\leq\delta+\eps
\end{align*}
proving the first inequality. On the other hand, since $\rho$
is always a bottleneck distance for some slice, we have that $\rho\leq \dmatch(F^1,F^2)$ throughout the algorithm, 
hence also $\delta\leq \dmatch(F^1,F^2)$.

\section{Detailed proof of Theorem~\ref{thm:corner-point}}
\label{app:proof_hor_vert}
To prove the theorem, it will be convenient to define
\[
D(\sslope,\smu):=|\wpush_p(\slice_{\sslope,\smu})-\wpush_p(\centerslice)|,
\]
where $\slice_{\sslope,\smu}$ is the slice corresponding to $(\sslope,\smu)$.
It follows immediately that
\[v(p,\slicebox)=\max_{(\sslope,\smu) \in \slicebox} D(\sslope,\smu).\]
and the theorem states that $D$ is maximized in one 
of the corners of $\slicebox$.
The next lemma reveals the structure of $D$ along vertical and horizontal
line segments within~$\slicebox$.

\begin{mylem}\label{lem:hor_vert}
Let $\slicebox=[a,b]\times[c,d]$. For any $\sslope_0\in[a,b]$, the function
\[
D(\sslope_0,\cdot): [c,d]\to\R
\]
is maximized at $c$ or at $d$. Likewise, for $\smu_0$ fixed, the function
\[
D(\cdot,\smu_0): [a,b]\to\R
\]
is maximized at $a$ or at $b$. 
\end{mylem}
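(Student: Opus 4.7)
The plan is to show that, in each direction, $D$ restricted to a line segment is a piecewise linear continuous function with only finitely many pieces, and any local maximum of such a function coming from an absolute value is ruled out; so the maximum must lie at an endpoint.

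\textbf{Step 1: Structure of $\wpush_p$ along horizontal and vertical lines.} Fix the slice type (the argument is the same for all four; we spell out one case). Consulting \cref{tbl:wpush-formulas}, in each type the formula for $\wpush_p(\slice_{\sslope,\smu})$ takes one of two forms, depending on whether $p$ lies above or below $\slice_{\sslope,\smu}$, and each form is a polynomial of degree $\leq 1$ in $\smu$ (for $\sslope$ fixed) and of degree $\leq 1$ in $\sslope$ (for $\smu$ fixed). The transition set (where $p$ lies on $\slice$) is a single line in the $(\sslope,\smu)$-plane, so for a fixed $\sslope_0$ it is crossed at most once as $\smu$ varies on $[c,d]$. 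Continuity of $\wpush_p$ at the transition is automatic, since both formulas compute the same (signed) distance when $p$ lies on $\slice$. Hence
\[
f(\smu) := \wpush_p(\slice_{\sslope_0,\smu}), \qquad \smu \in [c,d],
\]
is continuous and piecewise linear with at most two linear pieces.

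\textbf{Step 2: Absolute value introduces only minima.} Let $c_0 := \wpush_p(\centerslice)$ so that $D(\sslope_0,\smu) = |f(\smu) - c_0|$. Since $f - c_0$ is continuous and piecewise linear with at most two linear pieces on $[c,d]$, the function $|f - c_0|$ is continuous and piecewise linear with at most four linear pieces: each linear piece of $f - c_0$ contributes one or two pieces to $|f - c_0|$, the extra breakpoint arising at a zero of $f - c_0$. On each linear piece the maximum is attained at one of its endpoints. Breakpoints of $|f - c_0|$ come in two kinds: a breakpoint of $f$ itself (where the value of $|f - c_0|$ is the same on both sides by continuity, so we may as well regard the two pieces as one for the purpose of locating the maximum), and a zero of $f - c_0$ (where $|f - c_0| = 0$, which is a local minimum). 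Therefore the maximum of $|f - c_0|$ on $[c,d]$ must be attained at one of the two endpoints $c$ or $d$.

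\textbf{Step 3: The analogous argument for fixed $\smu_0$.} Again by \cref{tbl:wpush-formulas}, for fixed $\smu_0$ the function $g(\sslope) := \wpush_p(\slice_{\sslope,\smu_0})$ is piecewise linear and continuous in $\sslope$ with at most two pieces (the transition $\sslope$-value being the one at which $p$ lies on $\slice_{\sslope,\smu_0}$). The same argument as in Step 2 applied to $|g - c_0|$ on $[a,b]$ gives that $D(\cdot,\smu_0)$ attains its maximum at $a$ or at $b$.

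The only part that needs any real care is Step 1, namely that $\wpush_p$ is continuous across the transition and that, within each region, it really is linear in each of $\sslope$ and $\smu$ separately. Both properties are immediate from \cref{tbl:wpush-formulas}, so the proof is essentially a one-variable convexity-type argument applied eight times (four slice types, two directions), and we omit the case-by-case verification.
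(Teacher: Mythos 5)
There is a genuine gap in Step 2. From the facts you actually establish in Step 1 (continuity, and piecewise linearity in the varied parameter with at most two pieces), it does \emph{not} follow that $|f-c_0|$ is maximized at an endpoint: your dismissal of interior breakpoints of $f$ (``we may as well regard the two pieces as one for the purpose of locating the maximum'') is not a valid deduction. A continuous two-piece linear function can have a strict interior maximum (a tent shape); for instance $f(\smu)=\min\{\smu-c,\,d-\smu\}$ with $c_0<\min f$ gives $|f-c_0|$ maximized only at the interior breakpoint, with strictly smaller values at $c$ and $d$. So the conclusion is not a formal consequence of ``piecewise linear with two pieces plus continuity''; it needs an additional structural fact which you never state or verify, namely that along every axis-parallel segment the map $\smu\mapsto\wpush_p(\slice_{\sslope_0,\smu})$ (resp.\ $\sslope\mapsto\wpush_p(\slice_{\sslope,\smu_0})$) is monotone -- equivalently, that whenever the above/below transition occurs inside the interval, the formula of \cref{tbl:wpush-formulas} on one side is \emph{constant} in the varied parameter (e.g.\ for flat $y$-slices and varying $\smu$, the ``below'' value $\sslope_0 p_x$ is independent of $\smu$ while the ``above'' value $p_y-\smu$ is not), so that a maximum attained at the transition point is also attained at the endpoint of the constant branch.

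This missing fact is exactly what the paper's proof supplies: for each slice type and each direction it checks that $D$ is V-shaped, constant, or a V-shaped branch glued continuously to a constant branch, and only because one branch is constant can the boundary-maximization be concluded in the mixed case. Your Step 1, which you single out as the only part needing real care, records linearity and continuity but not this degeneration of one of the two formulas, and the case-by-case verification you omit is precisely where the content of the lemma lies. With the monotonicity (one-piece-constant) observation added for all four slice types and both directions, your argument becomes correct and essentially coincides with the paper's.
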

\begin{proof}
\newcommand{\topslice}{\slice_{\text{top}}}
\newcommand{\bottomslice}{\slice_{\text{bottom}}}
Throughout the proof, we write $w:=\wpush_p(\centerslice)$,
which is a constant independent of $\sslope$ or $\smu$.
Let us consider the case of flat $y$-slices first. 
For $\sslope_0\in[a,b]$ fixed,
write $\topslice$ for the slice $(\sslope_0,d)$
and $\bottomslice$ for the slice $(\sslope_0,c)$.
There are three possible locations of the point $p$: it can be
above both $\bottomslice$ and $\topslice$, below both of them,
or above $\bottomslice$ and below $\topslice$.

In the first case, $D$ takes the form
\[D(\sslope_0,\smu)=|p_y-\smu-w|\]
on the whole interval $[c,d]$ (cf. \cref{tbl:wpush-formulas}).
This is a ``V-shaped'' function which takes its maximum at a boundary point.

In the second case, $D$ takes the form
\[D(\sslope_0,\smu)=|\sslope_0p_x -w|\]
which is a constant function, clearly also being maximal at either
boundary point.

In the third case, there is a unique point $\xi\in[c,d]$ such that $p$ lies
on the slice parameterized by $(\sslope_0,\xi)$. Then, on the interval
$[c,\xi]$, $D$ is a V-shaped function as above, and on $[\xi,d]$, $D$
is a constant function (and the two branches coincide at $\xi$, 
since $D$ is continuous). It follows that also in this case, $D(\sslope_0,\cdot)$
is maximized at a boundary point.

The analysis of the function $D(\cdot,\smu_0)$ for $\smu_0\in[c,d]$
is very similar. We write $\topslice$ for the slice $(b,\smu_0)$
and $\bottomslice$ for $(a,\smu_0)$. Note that the slices $(\sslope,\mu_0)$
for $\sslope\in [a,b]$ correspond to the slices obtained when rotating
from $\bottomslice$ to $\topslice$ in counterclockwise direction
with fixed origin $(0,\smu_0)$. There are same three possible cases
as above for the location of $p$ with respect to $\topslice$ and $\bottomslice$.

If $p$ is above both slices, $D$ takes the form
\[D(\sslope,\smu_0)=|p_y-\smu_0-w|\]
on $[a,b]$,
which is a constant function. If $p$ is below both slices, $D$ takes the
form
\[D(\sslope,\smu_0)=|\sslope p_x -w|\]
which is a V-shaped function.
If $p$ is in-between the slices, there is again a unique $\xi$ such that
$p$ lies on the slice $(\xi,\smu_0)$, and the function splits into
a V-shaped branch and a constant branch. This proves the statement for the
case of flat $y$-slices.

The other three cases are analogous: In all of them, the functions
$D(\sslope_0,\cdot)$ and $D(\cdot,\smu_0)$ are either $V$-shaped, constant,
or a combination of both.
\end{proof}

\begin{proof}(of Theorem~\ref{thm:corner-point})
Let $(\sslope,\smu)$ be any point in $\slicebox$. By Lemma~\ref{lem:hor_vert}
we can move $(\sslope,\smu)$ vertically to either the lower
or upper boundary without decreasing the $D$-value. Then, using
Lemma~\ref{lem:hor_vert}, we can move the point horizontally to
one the corners, again without decreasing the $D$-value.
The statement follows.
\end{proof}

\section{Detailed proof of Theorem~\ref{thm:local_constant_bound}}
\label{app:constant_bound}
\newcommand{\intermslice}{\tilde{\slice}}
\newcommand{\intermsslope}{\tilde{\sslope}}
\newcommand{\intermsmu}{\tilde{\smu}}
Let us start with the case of flat $y$-slices.
We first prove that for any two slices $\slice = (\sslope,\smu)$ and $\slice' = (\sslope',\smu')$
and any $p \in [0, X] \times [0, Y]$, it holds that
\begin{eqnarray*}
| \wpush_p(\slice) -\wpush_p(\slice') | \leq |\smu - \smu'| + X |\sslope - \sslope'|
\end{eqnarray*}
We consider three cases: if $p$ is above both $\slice$ and $\slice'$,
\[
 | \wpush_p(\slice)-\wpush_p(\slice') | = |(p_y + \smu)-(p_y + \smu')|=|\smu - \smu'|,
\]
and the bound clearly holds.
If $p$ is below both slices,
\begin{eqnarray*}
    | \wpush_p(\slice) -\wpush_p(\slice') | = |\sslope p_x - \sslope' p_x | \leq p_x | \sslope - \sslope'|,
\end{eqnarray*}
and the bound holds, because $X$ is the maximal possible value of $p_x$.
Finally, if $p$ is above $\slice$ and below $\slice'$ (or vice versa),
then the line segment connecting $(\sslope, \smu)$ and $(\sslope', \smu')$ 
contains at least one point $(\intermsslope, \intermsmu)$ such that the
slice $\intermslice$ defined by these parameter values contains the point $p$.
Since $p$ is on $\intermslice$, we can use either formula for the weighted
push in the corresponding column of \cref{tbl:wpush-formulas} for $\intermslice$.
Together with the triangle inequality, we obtain:
\begin{align*}
    & | \wpush_p(\slice) -\wpush_p(\slice') |  \\
   = & | \wpush_p(\slice) - \wpush_p(\intermslice) |  + | \wpush_p(\intermslice) - \wpush_p(\slice')|\\
   = & | (p_y - \smu) - (p_y - \intermsmu)| + | \intermsslope p_x - \sslope p_x |\\
   \leq  &  | \intermsmu - \smu |  + p_x | \intermsslope  - \sslope' | \\
   \leq & | \smu'- \smu | + X | \sslope'- \sslope |,
\end{align*}
    where the last inequality holds, because $\intermsmu$ is between $\smu$ and $\smu'$, and $\intermsslope$ is
    between $\sslope$ and $\sslope'$.

The first case of the statement follows at once by setting $L'$
to be the center of the box $\slicebox$, and $L$ to be any point in $\slicebox$
since in this case, $|\sslope' - \sslope| \leq \Delta \sslope / 2$
    and $|\smu' - \smu| \leq \Delta \smu / 2$.

Next we consider the case of steep $y$-slices.
We claim that for any two slices $\slice = (\sslope,\smu)$ and $\slice' = (\sslope',\smu')$
and any $p \in [0, X] \times [0, Y]$, it holds that
\begin{align*}
  & | \wpush_p(\slice) -\wpush_p(\slice') | \\
\leq & \sslope  |\smu - \smu'| + ( Y - \smu') |\sslope - \sslope'|
\end{align*}
We consider the same three cases as above. If $p$ is below both $\slice$
and $\slice'$, the weighted pushes are both equal to $p_x$, 
and the difference is 0. 
If $p$ is above both $\slice$ and $\slice'$, we calculate
\begin{align*}
    & | \wpush_p(\slice) -\wpush_p(\slice') | \\
  = & | \sslope (p_y - \smu) - \sslope' (p_y - \smu') | \\
  =  & | \sslope (p_y - \smu) - \sslope (p_y - \smu')  + \sslope (p_y - \smu')  - \sslope' (p_y - \smu') | \\
 \leq & | \sslope (\smu' - \smu)| + |(\sslope-\sslope')(p_y-\smu')|\\
 = & \sslope |\smu'-\smu| + (p_y-\smu')|\sslope-\sslope'|.
\end{align*}
Note that in the last line, we use that $p_y-\smu'$ is positive, which
follows from $p$ being above $\slice'$, and $p_x\geq 0$.

In the third case, $p$ is above $\slice$ and below $\slice'$.
Instead of the line segment connecting them, 
we consider the path from $(\sslope,\smu)$ to $(\sslope',\smu')$
that first goes vertically to $(\sslope,\smu')$, and then horizontally
to $(\sslope',\smu')$. Also on this path, there is a slice 
$\intermslice$ such that $p$ is on $\intermslice$. 
Then, by triangle inequality,
the difference of the weighted pushes for $\slice$ and $\slice'$
is at most
\[
| \wpush_p(\slice) - \wpush_p(\intermslice) |  + | \wpush_p(\intermslice) - \wpush_p(\slice')|
\]
and the second term is equal to $0$. Hence, using the previous calculation,
the difference can be bounded by
\[
\sslope |\intermsmu-\smu| + (p_y-\intermsmu)|\sslope-\intermsslope|
\]
Now, if $\intermslice$ is on the vertical branch of the path, $\sslope=\intermsslope$, and the second term vanishes. If $\intermslice$ is on the horizontal
part, $\intermsmu=\smu'$, and the bound above is equal to
\[\sslope |\smu'-\smu| + (p_y-\smu')|\sslope-\intermsslope|\]
and the bound follows because $|\sslope-\intermsslope|\leq|\sslope-\sslope'|$.

Using this estimate on $| \wpush_p(\slice) - \wpush_p(\slice') |$,
the theorem statement for steep $y$-slices 
follows by choosing $\slice$ as the center slice, and $\slice'$
as any other slice in $\slicebox$. Note that using choosing $\slice'$
as the center slice instead yields the (seemingly) different bound
\[
\frac{1}{2}\left(\sslope_{\max} \Delta \smu + (Y - \centersmu) \Delta \sslope \right)\},
\]
but it can be verified by a simple calculation that both bounds are equal.
The bounds for $x$-slices are proved analogously.

\section{Complexity proofs}
\label{app:complexity}

\begin{mylem}\label{lem:k_bound}
A level-$k$-box with 
\[k:=\left\lceil\log\frac{2C}{\eps}\right\rceil\]
(where the logarithm is with base $2$)
does not get subdivided by the algorithm.
\end{mylem}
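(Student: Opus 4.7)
The plan is to combine the previous lemma (which bounds the output $\mu$ of the \texttt{Bound} primitive in terms of the level $k$) with the subdivision criterion of the algorithm. Specifically, the algorithm subdivides a box only when $\mu > \rho + \eps$, where $\rho$ is the current best lower bound on $\dmatch(F^1,F^2)$. So it suffices to show that for $k = \lceil \log(2C/\eps) \rceil$, the bound $\mu$ is at most $\rho + \eps$, and hence the subdivision criterion fails.

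First I would invoke the previous lemma, which gives
\[\mu \leq d_B(F^1_{\centerslice},F^2_{\centerslice}) + 2C\cdot 2^{-k}.\]
Next, from the choice $k = \lceil \log(2C/\eps) \rceil$ we get $2^{k} \geq 2C/\eps$, i.e.\ $2C\cdot 2^{-k} \leq \eps$. Substituting gives
\[\mu \leq d_B(F^1_{\centerslice},F^2_{\centerslice}) + \eps.\]

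Finally, I would observe that before the \texttt{Bound} primitive is invoked on $\slicebox$, the algorithm first calls \texttt{Eval($F^1,F^2,\centerslice$)} and updates $\rho$ accordingly. Hence after this update, $\rho \geq d_B(F^1_{\centerslice},F^2_{\centerslice})$. Combining this with the bound above yields
\[\mu \leq \rho + \eps,\]
so the condition $\mu > \rho + \eps$ that triggers subdivision fails, and $\slicebox$ is not subdivided. There is no real obstacle here; the only subtle point is to make sure that the value of $\rho$ used in the subdivision test has already been updated with the bottleneck distance at $\centerslice$, which is guaranteed by the order of operations in the algorithm described in Section~\ref{sec:approx_alg}.
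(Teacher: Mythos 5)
Your proposal is correct and follows essentially the same route as the paper's proof: invoke the preceding lemma to bound $\mu$ by $d_B(F^1_{\centerslice},F^2_{\centerslice})+2C\,2^{-k}$, use the choice of $k$ to get $2C\,2^{-k}\leq\eps$, and note that $\rho\geq d_B(F^1_{\centerslice},F^2_{\centerslice})$ because \texttt{Eval} at the center slice precedes the subdivision test. Nothing is missing.
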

\begin{proof}
We have to show that $\mu\leq\rho+\eps$, where $\mu$ is the upper bound
computed by the \texttt{Bound} primitive, and $\rho$ is largest 
weighted bottleneck distance encountered at the moment 
when the algorithm decide whether to subdivide $\slicebox$.
Note that in this moment, $\rho\geq d_B(F^1_{\centerslice},F^2_{\centerslice})$
is ensured because the latter value has been computed
in the previous step. Hence, using the previous lemma,
\[
\mu\leq d_B(F^1_{\centerslice},F^2_{\centerslice})+2C2^{-k}\leq \rho+2C2^{-k}\leq \rho+\eps,
\]
where the last step follows from the choice of $k$.
\end{proof}
\begin{proof}(of Theorem~\ref{thm:absolute_approx_complexity})
With $k$ as above, the worst case is that the algorithm
considers all boxes of level $k$. In that case, the total number of boxes
considered is 
\[4(4^0+4^1+\ldots+4^k=O(4^k)\]
Plugging in $k$ yields that $4^k=(2^k)^2=O(\left(\frac{C}{\eps}\right)^2)$
as the number of considered boxes. On each box, the algorithm evaluates
the weighted bottleneck distance at the center slice, which requires
the computation of weighted pushes, of two persistence diagrams,
and of their bottleneck distance.
Complexity-wise, the dominating step is the persistence computation,
which we do in $O(n^3)$ steps (this complexity can be reduced
to $O(n^\omega)$, where $\omega$ is the matrix multiplication constant~\cite{mms-zigzag}).
The complexity bound follows.
\end{proof}

\subparagraph{Relative approximation}
For the proof of Theorem~\ref{thm:relative_approx_complexity}, we need the following lemma stating that 
$\rho$ will eventually be a close approximation of the matching distance.
\begin{mylem}
\label{lem:rho_close_to_dmatch}
Write $\dmatch:=\dmatch(F^1,F^2)$ and assume $\dmatch>0$.
For
\[
k\geq 1+\left\lceil\log\frac{(1+\eps)2C}{\eps \dmatch}\right\rceil,
\]
it holds that when the algorithm considers a level-$k$-box,
we have that $\rho\geq\dmatch/(1+\eps)$.
\end{mylem}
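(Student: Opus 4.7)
The plan is to fix an arbitrary slice $\slice^*$ (together with its type) and consider the unique nested sequence of boxes $B^*_0 \supset B^*_1 \supset \cdots$ containing $\slice^*$ at each level. Since the algorithm traverses the quad-tree in breadth-first order, by the moment it considers any level-$k$ box every box of level at most $k-1$ has already been processed. I would then split on whether all of $B^*_0, \ldots, B^*_{k-1}$ were subdivided or not, and bound $d^* := d_B(F^1_{\slice^*}, F^2_{\slice^*})$ from above in terms of the current value of $\rho$ in each case.

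In the first case, some ancestor $B^*_{j'}$ with $j' \leq k-1$ was not subdivided. The stopping rule of the relative variant then says $\mu^*_{j'} \leq (1+\eps)\rho^*_{j'}$, where $\rho^*_{j'}$ denotes the value of $\rho$ at that moment. Since by definition of the \texttt{Bound} primitive $\mu^*_{j'}$ is an upper bound on $d_B(F^1_\slice, F^2_\slice)$ for every $\slice \in B^*_{j'}$, and since $\rho$ is nondecreasing through the algorithm, this yields $d^* \leq (1+\eps)\rho$ at the moment we consider our level-$k$ box.

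In the second case, every ancestor $B^*_j$ with $j \leq k-1$ was subdivided. When $B^*_{k-1}$ was processed the algorithm evaluated $d_B$ at its center slice $c^*_{k-1}$ and updated $\rho$, so from that point onwards $\rho \geq d_B(F^1_{c^*_{k-1}}, F^2_{c^*_{k-1}})$. Applying Lemma~\ref{lem:upper_bound_lemma} to $B^*_{k-1}$, which contains $\slice^*$, and bounding the two variation terms by the global estimate~(\ref{eqn:glob_bound}) at level $k-1$, gives
\[
d^* \;\leq\; v(F^1, B^*_{k-1}) + d_B(F^1_{c^*_{k-1}}, F^2_{c^*_{k-1}}) + v(F^2, B^*_{k-1}) \;\leq\; \rho + 2C\cdot 2^{-(k-1)}.
\]

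Taking the supremum over all slices $\slice^*$ collapses the two cases into the single inequality $\dmatch \leq \max\{(1+\eps)\rho,\; \rho + 2C \cdot 2^{-(k-1)}\}$. The hypothesis on $k$ is calibrated exactly so that $2C \cdot 2^{-(k-1)} \leq \frac{\eps\,\dmatch}{1+\eps}$; substituting this into the second branch gives $\dmatch \leq \rho + \frac{\eps\,\dmatch}{1+\eps}$, which rearranges to $\rho \geq \dmatch/(1+\eps)$, and the first branch gives the same conclusion directly. The main obstacle is the bookkeeping: keeping straight \emph{which} box supplies the bound in each case, and making sure that values of $\rho$ at earlier moments in the algorithm can be transferred to the current moment via monotonicity. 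Once this is in place, the arithmetic is an immediate consequence of the upper bound lemma and~(\ref{eqn:glob_bound}).
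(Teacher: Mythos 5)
Your proposal is correct and follows essentially the same argument as the paper: a case split on whether the (ancestor) box containing a given slice was left unsubdivided (use the stopping rule $\mu\leq(1+\eps)\rho$) or whether its level-$(k-1)$ ancestor was processed (use the center evaluation plus the $2C2^{-(k-1)}$ variation bound), together with monotonicity of $\rho$ and the calibration of $k$. The only cosmetic difference is that the paper fixes a slice realizing $\dmatch$ while you bound an arbitrary slice and take the supremum at the end, which amounts to the same computation.
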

\begin{proof}
\newcommand{\optimalslice}{\slice^\ast}
Let $\optimalslice$ be a slice such that the matching distance is realized
as $d_B(F^1_{\optimalslice},F^2_{\optimalslice})$. Note that the algorithm
handles all boxes of level $<k$ before handling any box of level $k$.
Now, we distinguish two cases:

The first case is that $\optimalslice$ lies in some box  
of level $<k$ for which the algorithm did not subdivide further.
That means that $\mu\leq (1+\eps)\rho'$, where $\mu$ is an upper bound
for the box and $\rho'$ is the value of $\rho$ at this moment of the 
algorithm. Note that since $\optimalslice$ lies in $\slicebox$,
$\mu\geq\dmatch$ must hold. Also, $\rho'\leq\rho$ because $\rho$
only increases. It follows that
\[
\dmatch\leq\mu\leq(1+\eps)\rho'\leq(1+\eps)\rho
\]
proving the statement for the first case.

The second case is that $\optimalslice$ lies in some level-$(k-1)$-box
$\slicebox$ which has been subdivided. 
By the way how $\mu$ is computed, we have that 
\[\mu\leq d_B(F^1_{\centerslice},F^2_{\centerslice})+2C2^{-(k-1)},\]
where $\centerslice$ is the center slice of the box. Moreover, as before,
$\dmatch\leq\mu$ holds, and $d_B(F^1_{\centerslice},F^2_{\centerslice})\leq\rho$
because $\rho$ is updated using $\centerslice$. In summary, we obtain
\[
\dmatch\leq\rho+2C2^{-(k-1)}
\]
The bound on $k$ ensures that 
\[
2C2^{-(k-1)}\leq 2C 2^{\log\frac{\eps\dmatch}{2C(1+\eps)}}=\frac{\eps\dmatch}{1+\eps}
\]
so we obtain
\[
\rho\geq \dmatch-2C2^{-(k-1)}\geq \dmatch - \frac{\eps\dmatch}{1+\eps} =\frac{\dmatch}{1+\eps}.
\]
\end{proof}

\begin{mylem}
If $\dmatch>0$, 
a level-$k$-box with 
\[k:=1+\left\lceil\log\frac{(1+\eps)2C}{\eps\dmatch}\right\rceil\]
does not get subdivided by the relative approximation algorithm.
\end{mylem}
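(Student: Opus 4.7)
The plan is to combine the two previous lemmas—the upper bound on $\mu$ in terms of the center slice value plus $2C 2^{-k}$, and Lemma~\ref{lem:rho_close_to_dmatch} guaranteeing $\rho\geq\dmatch/(1+\eps)$ when a level-$k$-box is considered. The subdivision criterion in the relative variant is $\mu>(1+\eps)\rho$, so the goal is to show $\mu\leq(1+\eps)\rho$, i.e., $\mu-\rho\leq\eps\rho$, at the moment when the algorithm inspects such a box.

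First, I would invoke the earlier lemma to write $\mu\leq d_B(F^1_{\centerslice},F^2_{\centerslice})+2C 2^{-k}$, noting (as in the absolute case) that this holds for both the L- and C-bound since neither exceeds the global bound. Since the algorithm evaluates $d_B(F^1_{\centerslice},F^2_{\centerslice})$ before calling \texttt{Bound} and updates $\rho$ accordingly, we have $\rho\geq d_B(F^1_{\centerslice},F^2_{\centerslice})$, so $\mu\leq\rho+2C 2^{-k}$. It therefore suffices to verify that $2C 2^{-k}\leq\eps\rho$.

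Second, I would apply Lemma~\ref{lem:rho_close_to_dmatch}, whose hypothesis on $k$ is exactly the one assumed here, to obtain $\rho\geq \dmatch/(1+\eps)$, so $\eps\rho\geq \eps\dmatch/(1+\eps)$. On the other side, the choice $k\geq 1+\log\frac{(1+\eps)2C}{\eps\dmatch}$ is equivalent to $2C 2^{-k}\leq\frac{\eps\dmatch}{2(1+\eps)}$, which is strictly less than $\eps\dmatch/(1+\eps)$. Chaining these inequalities yields $2C 2^{-k}\leq\eps\rho$, hence $\mu\leq(1+\eps)\rho$, and the box is not subdivided.

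There is no real obstacle here: the statement is essentially a bookkeeping consequence of the two preceding lemmas, with the factor of $2$ in the ``extra $+1$'' inside the ceiling precisely compensating for the gap between $\rho$ and $\dmatch$ (which is at worst $\dmatch-\rho\leq \eps\dmatch/(1+\eps)$). The only subtlety worth flagging is that Lemma~\ref{lem:rho_close_to_dmatch}'s conclusion is used at exactly the moment the algorithm decides whether to subdivide the level-$k$-box; since $\rho$ is non-decreasing, any later use of $\rho$ would only strengthen the inequality, so the argument is not sensitive to the BFS ordering beyond what Lemma~\ref{lem:rho_close_to_dmatch} already records.
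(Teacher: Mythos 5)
Your proposal is correct and follows essentially the same route as the paper: bound $\mu$ by $\rho+2C2^{-k}$ via the earlier lemma and the fact that the center evaluation precedes the subdivision test, then invoke Lemma~\ref{lem:rho_close_to_dmatch} to convert the choice of $k$ into $2C2^{-k}\leq\eps\rho$. The only cosmetic difference is the order in which the inequalities are chained (the paper bounds $2C2^{-k}$ by $\eps\dmatch/(1+\eps)$ first and then uses $\dmatch\leq(1+\eps)\rho$), which does not change the argument.
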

\begin{proof}
Let $\slicebox$ be some level-$k$-box. We have to show that
$\mu\leq (1+\eps)\rho$ for $\slicebox$. Note that 
\[
\mu\leq d_B(F^1_{\centerslice},F^2_{\centerslice})+2C2^{-k}\leq\rho+\frac{\eps\dmatch}{(1+\eps)}
\]
Moreover, the $k$ in question satisfies the assumptions of Lemma~\ref{lem:rho_close_to_dmatch}, so we have that $\dmatch\leq(1+\eps)\rho$, and we obtain
\[
\mu\leq \rho+\frac{\eps\dmatch}{(1+\eps)}\leq (1+\eps)\rho
\]
\end{proof}

\begin{proof}(of Theorem~\ref{thm:relative_approx_complexity})
The proof is the analogous to the proof of Theorem~\ref{thm:absolute_approx_complexity}, noting that the algorithm has to consider 
\[O(4^k)=O(\left( \frac{(1+\eps)2C}{\eps\dmatch}\right)^2)\]
boxes, and the cost for each box is $O(n^3)$.
\end{proof}

\section{Comparison with~\cite{italian}}
\label{app:comparison}
Many ideas used in our paper appear in~\cite{italian} in some form. 
For instance,
the idea of restricting the parameter space to a bounded region 
(see the discussion after our Lemma~\ref{lem:wpush}) corresponds
to Lemma~3.1 in~\cite{italian}. However, instead of completely disregarding
the region outside of the bounded region, they introduce two points
in this domain, which seems unnecessary and complicates the algorithmic
description. As another example, our upper bound in Theorem~\ref{thm:local_constant_bound} corresponds 
to their bound from Lemma~3.3 (with worse constants
which we discuss below). The proof of their lemma contains a case distinction,
where they consider, expressed in our notation, 
the case of two flat slices, two steep slices,
and the mixed case of a flat and a steep slice.
By our choice of splitting the parameter space in $4$ parts, we can ensure
that all slices within a box are of the same type, which makes the (tedious)
analysis of mixed cases unnecessary.

Moreover, we improve on~\cite{italian} in several algorithmic ways: foremost,
our local bounds provide better estimates of the variation and lead to
fewer subdivisions. Moreover, when we subdivide boxes, we keep the
aspect ratio of the box the same in the next iteration. This allows us
to cover the initial box with $4^k$ boxes of level $k$.
The approach in~\cite{italian} uses squares instead and hence requires
$C4^{k}$ boxes on level $k$. Since their algorithm has to subdivide
to a level of $O(C/\eps)$ in the worst case (same as ours), the complexity
becomes $O(n^3\frac{C^3}{\eps^2})$, which is a factor $C$ worse.
Finally, the approaches differ in the choice of the parameterization.
Their approach, restated in geometric terms, represents a slice $\slice$ by
two parameters $(\lambda,\beta)$, where $\lambda$ is the sine of the angle
of the $\slice$ with the $x$-axis, 
and the origin is chosen as the point $(\beta,-\beta)$.
While this approach has the pleasant effect of avoiding a case distinction
between $x$-slices and $y$-slices, it has two downsides: first of all, 
the bounding rectangle in parameter space contains more slices than
in our version, and the fact that the origin is further away from the critical
points (which all lie in the upper-right quadrant) leads to a worsening
of the bounds in Theorem~\ref{thm:local_constant_bound}.
This partially explains the discrepancy of their upper bound of $(16C+2)\delta$
(Theorem 3.4 in~\cite{italian}) and the bound of $4C\delta$ that could be
achieved with our methods.

\section{Datasets used in experiments}
\label{sec:datasets}
\subparagraph{Lower-star bi-filtrations.}
A simple way to construct bi-filtration of a simplicial complex $K$
is the \emph{lower star filtration}. For that, assume that $\ffunc(\cdot)=(\ffunc_1(\cdot),\ffunc_2(\cdot))$
is defined on the vertices of $K$ (e.g., consider the case that $K$
is a triangulated mesh, and $\phi$ measures properties at the vertices
such as distance to the barycenter or local discrete curvature).
Now set, for a $d$-simplex $\sigma=\{v_0, \dots, v_{d}\}$,
\[
\ffunc(\sigma):=(\max_{i=0,\ldots,d}\ffunc_1(v_i),\max_{i=0,\ldots,d}\ffunc_2(v_i)).
\]
Geometrically, $\ffunc(\sigma)$ is the smallest point $q$ in $\R^2$ with respect
to $\leq$ such that the lower-right quadrant with center $q$ (as in Figure~\ref{fig:bi-filtration} (middle)) contains $\ffunc(v_0),\ldots,\ffunc(v_d)$.

Both datasets \dsGH and \dsED are based on a Non-Rigid World Benchmark \cite{nrwb},
a collection of 3D-shapes represented as triangular meshes.
A triangular mesh is a geometric realization of a 2-dimensional
simplicial complex, and, if we fix a function $\phi \colon V \to \RR^2$
on the vertices of a mesh, this gives rise to a lower star filtration.

For \dsGH we use the function $\phi^{GH} = (\phi_1^{GH} / K_1, \phi_2^{GH} / K_2)$ defined as follows:
\begin{align}
    \phi_1^{GH}(v) &= \mbox{integral geodesic distance of $v$};\\
    \phi_2^{GH}(v) &= \mbox{HKS at $t = 1000$;} \\
    K_1 & = \max_v \phi^{GH}_1(v); \\
    K_2 & = \max_v \phi^{GH}_2(v).
\end{align}

HKS in the formula for $\phi_2^{GH}$ stands for Heat Kernel Signature. It is computed by solving the discrete analogue
of the heat equation on smooth manifolds (where the discrete Laplacian replaces the Laplace-Beltrami operator). 
The heat kernel was introduced in \cite{heat-kernel}, and became a very popular tool in shape analysis;
we used the publicly available code from \cite{pyhks-repo} to compute it.

Let $s(v, w)$ be the length of the shortest path that connects $v$ and $w$ on the mesh and does not go through
the interior of any of the mesh triangles. The integral geodesic distance $\phi_1^{GH}(v)$ is
a weighted sum over vertices $w \neq v$ of $s(v,w)$;
it was introduced in \cite{hilaga2001topology}, and we refer the reader to this paper
for further details and motivation. We compute $s(v,w)$ using Dijkstra's algorithm
on the graph that consists of the mesh vertices and edges with weight of an edge being Euclidean distance between its endpoints.

Normalization constants $K_1, K_2$ ensure that the maximal coordinates $X$ and $Y$ are 1.

Let us now describe the dataset \dsED.
We fix a mesh with vertices $V = \{v_1, \dots, v_n\}$, and let $b$ be the center of mass of the mesh, $b = \frac{1}{n} \sum_{i = 1}^n v_i$.
Vector $\vw$ is defined as
\[
    \vw = \frac{ \sum_{i = 1}^n (v_i - b ) \| v_i - b \| }{ \sum_{i = 1}^n \| v_i - b \|^2}.
\]
Let $\pi_{\vw}$ be the plane passing through $w$ and orthogonal to $\vw$,
$\ell_{\vw}$ be the line passing through $w$ and parallel to $\vw$;
$d(v, \ell)$ and $d(v, \pi)$ denote Euclidean distances from the point $v$ to the line $\ell$
and plane $\pi$. The function $\phi^{ED} = (\phi_1^{ED} / K_1, \phi_2^{ED} / K_2) $ that defines the $\dsED$ bi-filtrations is given by
\begin{align}
    \phi_1^{ED}(v_i) &= 1 - \frac{ d(v_i, \ell_{\vw}) } { \max_{k=1,\dots,n} {d(v_k, \pi_{\vw})} }; \\
    \phi_2^{ED}(v_i) &= 1 - \frac{ d(v_i, \pi_{\vw}) } { \max_{k=1,\dots,n} {d(v_k, \pi_{\vw})} }; \\
    K_1 & = \max_v \phi^{ED}_1(v); \\
    K_2 & = \max_v \phi^{ED}_2(v).
\end{align}
These formulas can be found on page 1743, Section 4.3 of \cite{italian}.
The \dsED dataset can be easily computed, since the formulas involve only elementary geometric calculations.

We generated 70 bi-filtrations that cover different classes of the benchmark (male and female figures 
in different poses, seahorses, cats, etc), hence we have 2,415 pairs to test.

Note that \dsGH and \dsED are different from \cite{italian}, because
the authors also applied additional transformations to the meshes
before computing the bi-filtrations; we skipped this step, because it is not clearly
described and hard to reproduce.

\subparagraph{Random bi-filtrations.}
In order to see the scaling on larger inputs, we generated random bi-filtrations as follows.
The input parameters are the number of maximal simplices $M$,
total number of vertices $N$, and the dimension of maximal simplices $D$.
We randomly generated $M$ distinct subsets of cardinality $D+1$
of the set $\{1, \dots, N\}$, and for each of these maximal simplices $\sigma$
we chose the $x$ and $y$ coordinates of $\ffunc(\sigma)$ uniformly at random from $[0, 1000]$.
Assume now that we defined $\ffunc$ on all simplices in some dimension $d+1 \leq D$,
and we want to define $\ffunc(\tau)$ on a simplex $\tau$ of dimension $d$.
We know that $\ffunc(\tau)$ must appear in the bi-filtration
before any of its co-faces.
To ensure that, we intersect all rectangles with the bottom left corner at $(0,0)$
and the upper right corner at $\ffunc(\sigma)$, where $\sigma$ ranges over all simplices such that $\tau$ is a face of $\sigma$
and $\ffunc(\sigma)$ is defined. This intersection is itself a rectangle; we pick
an integral point in this rectangle uniformly at random as $\ffunc(\sigma)$.
We generated 6 random bi-filtrations of dimension 1 for each $N \in \{500, 1000, 2000\}$,
with  $M = 4N$. These bi-filtrations form the dataset \dsR.

\section{Additional Experimental Results}
\label{sec:more_experiments}
\subparagraph{Reduction rate.}
The only measure reported in \cite{italian} is what the authors called
the \textit{reduction rate}. It is defined as $1 - c / 4^k$, where
$c$ is the number of calls, $k$ is the level of the deepest box in the quad-tree, on whose
center the algorithm actually performs a call (i.e., computes the weighted
bottleneck distance), and $4^k$ is the total number of boxes on level $k$.
What does the reduction rate measure? Suppose that for some reason we decided to look at the $k$-th level of 
the quad-tree only. We can simply compute the weighted bottleneck distance
at the center of each of the $4^k$ boxes and output the largest result; this would be
a brute force approach. If we have a bound and guess the level $k$ correctly,
then we can guarantee the desired approximation quality.
The reduction rate shows which fraction of the $4^k$ calls
we avoid by switching to the quad-tree algorithm; if it is 0.99, this means
that we avoided 99\% of calls.

We give our reduction rates (average, minimal, and maximal) in \cref{tbl:red_rate}.
The best reported reduction rate in \cite{italian} for $\eps = 0.1$ is 94.7\% (\dsGH) and 93.8\% (\dsED),
with average values of 60.6\% and 57\%; minimal values are not provided.
Our maximal values for the global bound approximately
agree with theirs,\footnote{The maxima need not be the same, since our datasets are not exact copies of theirs.}
and our average values are by roughly 10\% higher.

This means that with the G-bound we can on average avoid not 57\% of calls, as in \cite{italian},  
but 71\%. This should be expected, because the G-bound has a smaller constant factor,
see the end of \cref{app:comparison}. The advantage of the local and constant bounds
becomes evident in the worst case; we never have reduction by less than 86\% with the local bound,
while the global bound can go as low as 30\%.

\begin{table}\centering
    \ra{1.2}
\begin{tabular}{@{}rrrrrrr@{}}\toprule
    & \multicolumn{3}{c}{Global} & \multicolumn{3}{c}{Local Constant} \\
                          \cmidrule{2-4} \cmidrule{5-7} 
                     & Max & Avg   & Min  & Max  &   Avg  & Min  \\ \midrule
\dsGH & 92.95 & 69.47 & 56.43 & 96.95 & 86.97 & 75.77  \\
 \dsR & 64.44 & 56.28 & 32.88 & 96.64 & 91.23 & 81.86  \\
\dsED & 97.45 & 71.00 & 30.78 & 98.01 & 89.07 & 69.35  \\
      & \multicolumn{3}{c}{Local Linear} \\
        \cmidrule{2-4} 
      & Max & Avg   & Min \\ \midrule
\dsGH & 98.52 & 93.53 & 86.47 \\
\dsR     & 98.33 & 95.79 & 91.25 \\
\dsED & 99.30 & 96.49 & 86.39 \\
\bottomrule
\end{tabular}
    \caption{Reduction rate for $\eps = 0.1$, in percents.}
\label{tbl:red_rate}
\end{table}

\subparagraph{Scaling on random bi-filtrations.}
In datasets \dsGH and \dsED, the number of vertices is around 3,000. However,
since these datasets are lower-star bi-filtrations, the cardinality
of the persistence diagrams of the restrictions is smaller,
typically between 20 and 60 points. In dataset \dsR, the cardinality
of the diagrams in dimension 0 is almost equal to the number of vertices,
so it is larger by an order of magnitude even for 500
vertices. We use this dataset to show how our algorithm scales
on larger inputs.

\begin{table}\centering
    \ra{1.2}
\begin{tabular}{@{}rrrrrrr@{}}\toprule
    & \multicolumn{3}{c}{\#Calls} & \multicolumn{3}{c}{Time (sec)} \\
                          \cmidrule{2-4}  \cmidrule{5-7} 
\#vert. & L & C & G & L & C & G  \\ \midrule
   500 &    164 &    435 &    655 &    6.5 &      15.1  &     23.3 \\ 
  1000 &    893 &   2354 &   4222 &  197.2 &     505.7  &    983.5 \\ 
  2000 &   1233 &   3256 &   5995 &  911.8 &    2277.8  &   4254.3 \\ 
 \bottomrule
\end{tabular}
    \caption{Average number of calls and average running time with bounds L, C, and G for random bi-filtrations
    with different number of vertices. Relative error $\eps = 0.5$.}
\label{tbl:avg_calls_time_random}
\end{table}

In \cref{tbl:avg_calls_time_random}, we see how 
the running time and the number of calls grows as we increase
the size of the input. As in \cref{tbl:avg_calls_time},
the variance hidden behind the averaged numbers is large,
but the bounds compare in the expected way: the linear bound outperforms the constant bound,
and the constant bound outperforms the global bound.

In \cref{tbl:random_time_ratio} and \cref{tbl:random_call_ratio},
we provide the ratios of the running time and the number of calls.
These tables are similar to \cref{tbl:time_ratio} and \cref{tbl:call_ratio}.
For example, the first 3 columns of \cref{tbl:random_time_ratio} were obtained as follows.
For each pair of random bi-filtrations
with the same number of vertices, we measure the time
with the global bound and with the constant bound, and take their ratio.
The average of these ratios is in the first column, the minimum is in the second column,
and the maximum is in the third one.

Note that the ratios in these tables do not change much as we go from 500
vertices to 2,000.  If we compare the constant bound and the linear bound,
we notice that the ratios in \cref{tbl:random_call_ratio}
are very stable. The linear bound always reduces the number of calls
by a factor of approximately 2.5. The running time ratios in \cref{tbl:random_time_ratio}
increase for larger inputs, getting closer to the ratios of the number of calls.
This is actually expected, since the complexity of the \texttt{Eval} primitive
is super-linear, so the time spent in \texttt{Eval} starts to subsume the time spent on computing the L bound.
One should not expect that the ratio will grow with $n$: a better bound
only reduces the number of calls of \texttt{Eval}, but cannot accelerate
the computation of \texttt{Eval} itself.

\begin{table}\centering
    \ra{1.2}
\begin{tabular}{@{}rrrrrrr@{}}\toprule
                          & \multicolumn{3}{c}{Time: G / C} & \multicolumn{3}{c}{Time: C / L}\\
                          \cmidrule{2-7}
\#vert. & Avg & Min & Max        & Avg & Min & Max  \\ \midrule

   500 &   1.66 &   1.42 &   2.37 &   2.38 &   1.92 &   2.81 \\ 
  1000 &   1.76 &   1.37 &   2.22 &   2.53 &   2.34 &   2.77 \\
  2000 &   1.76 &   1.38 &   1.96 &   2.63 &   2.33 &   2.83 \\
\bottomrule
\end{tabular}
    \caption{Scaling on dataset \dsR, running time ratios for bounds G, C and L. Relative error $\eps = 0.5$}
\label{tbl:random_time_ratio}
\end{table}

\begin{table}\centering
    \ra{1.2}
\begin{tabular}{@{}rrrrrrr@{}}\toprule
                          & \multicolumn{3}{c}{\#Calls: G / C} & \multicolumn{3}{c}{\#Calls: C / L}\\
                          \cmidrule{2-7}
\#vert. & Avg & Min & Max        & Avg & Min & Max  \\ \midrule

   500 &   1.60 &   1.28 &   2.22 &   2.67 &   2.52 &   2.78   \\ 
  1000 &   1.68 &   1.25 &   2.03 &   2.65 &   2.57 &   2.80   \\ 
  2000 &   1.76 &   1.34 &   2.02 &   2.61 &   2.48 &   2.74   \\ 
\bottomrule
\end{tabular}
    \caption{Scaling on dataset \dsR, \#calls ratios for bounds G, C and L. Relative error $\eps = 0.5$}
\label{tbl:random_call_ratio}
\end{table}

\subparagraph{Heatmaps.}

Recall that we have 4 rectangles $\slices_1, \dots, \slices_4$ that parameterize 4 different types of slices.
Each slice passing through the origin is at the same time an $x$-slice and a $y$-slice,
and each slice with slope 1 is both flat and steep. Therefore
we can glue the rectangles $\slices_i$ together by identifying
points that represent the same slice, and we get a single domain $\slices$ with the slice
$y = x$ in the center, as in \cref{fig:slice_box_subdivision} (the coordinates in this figure do not
agree with the coordinates we use inside each $\slices_i$ for parameterization in \cref{sec:approx_alg}).  We can visualize values
of the weighted bottleneck distance by computing it at the center of each box of the quad-tree (of $\slices$) on a fixed level,
and using these values for the heatmap. The brighter a pixel of the heatmap,
the larger the value of $d_B(F^1_{\slice}, F^2_{\slice})$ for the corresponding slice $\slice$.

\begin{figure}
    \includegraphics[width=2.5in]{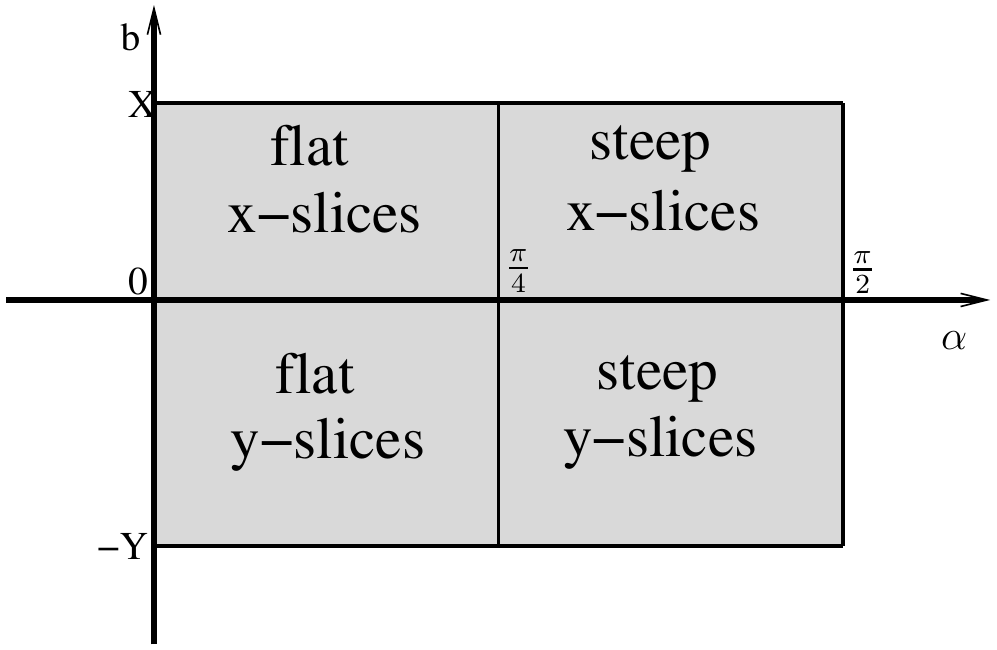}
    \caption{Decomposition of the domain $\slices$ into 4 types.}
    \label{fig:slice_box_subdivision}
\end{figure}

In \cref{fig:heatmap_good} and \cref{fig:heatmap_bad}, we show two examples (they were computed on the \dsED dataset)
of heatmaps.
For pair A, in \cref{fig:heatmap_good}, high values of the weighted bottleneck distance are concentrated
in a small bright spot around the center. 
We can expect that, whichever bound we use, the algorithm will need only a few subdivisions
in the darker area to ensure that these boxes cannot improve the lower bound.
The opposite is true for the heatmap of pair B, \cref{fig:heatmap_bad},
where a large part of all 4 quadrants has almost equal high values.
The algorithm will need to subdivide the boxes covering this part
until they become so small that the upper bound on each of them
will be within the error threshold. 
These expectations are confirmed by the experimental results.
It takes 325 calls to approximate the matching distance with $\eps = 0.1$
for pair A, and 838 calls for pair B,
if we use the linear bound in both cases.

Pair B is also an example of a case when the local constant bound does not
improve the performance in comparison with the global bound.
Indeed, since for pair B we have $X = Y = 1$, the global bound and the local constant bound agree
on two of the four quadrants (flat $y$-slices and steep $x$-slices, see \cref{eqn:glob_bound}), 
hence there will be no difference in the algorithm's
behavior there, and in this example all four quadrants require many subdivisions.

\begin{figure}[ht]
    \begin{centering}
        \begin{minipage}{0.45\textwidth}
            \includegraphics[scale=0.25]{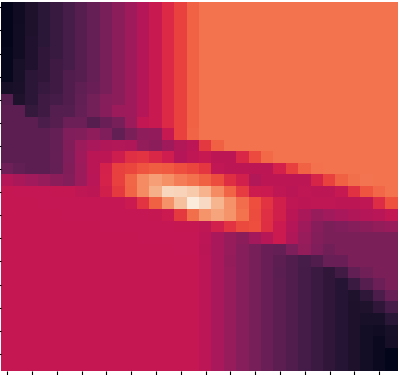}
            \caption{An example of a heatmap: pair A.}
            \label{fig:heatmap_good}
        \end{minipage}
        \hfill
        \begin{minipage}{0.45\textwidth}
            \includegraphics[scale=0.25]{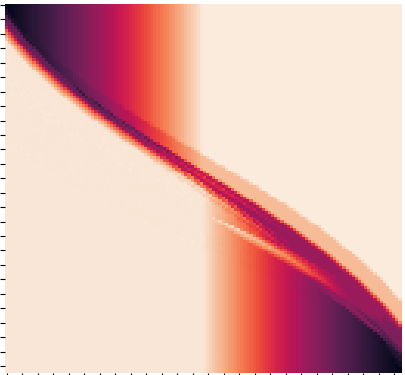}
            \caption{An example of a heatmap: pair B.}
            \label{fig:heatmap_bad}
        \end{minipage}
    \end{centering}
\end{figure}

\subparagraph{Additional plots.}
In \cref{fig:error-c-l-calls} we plot
the dependence of relative error on the number of calls
for the local constant and local linear bounds. The plot is for the same input
as \cref{fig:error-time}. Note that the plots are closer to each other
than the plots for the global and the local constant bound in \cref{fig:error-time},
and, if we chose to use time as the $x$-axis, the difference would become even smaller.

In \cref{fig:lower_upper_bound_calls} we plot
the evolution of the lower bound $\rho$ and the upper bound $\mu$
(for the same input). Note that the lower bound stabilizes very early.

\begin{figure}[t]
\centering
\begin{tikzpicture}
\pgfplotsset{
    scale only axis,
    xmin=0, xmax=5000
}
\begin{axis}[ 
        axis y line* = left,
                height=2.00in,
                width=.89\textwidth,
                xlabel={\#calls},
                ytick distance = 0.2,
                ylabel={Relative error $\eps$},
            ]
    \addplot    +[mark=none] table[x=calls, y=error, col sep = tab]  {data/ub_combined_second_cat0_david8_0.1.txt}; \label{plot_one_c_l}
    \addlegendentry{$\eps$, L-bound};
    \addplot    +[mark=none] table[x=calls, y=error, col sep = tab]  {data/ub_refined_second_cat0_david8_0.1.txt}; \label{plot_two_c_l}
    \addlegendentry{$\eps$, C-bound};
\end{axis}
\end{tikzpicture}
\caption{Error decay with time for C- and L-bounds.}
\label{fig:error-c-l-calls}
\end{figure}
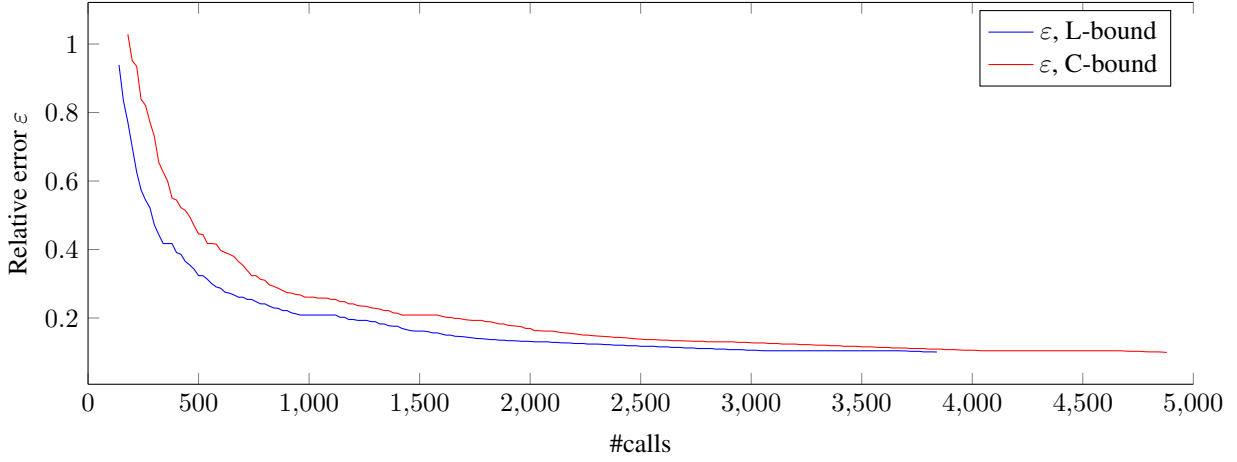

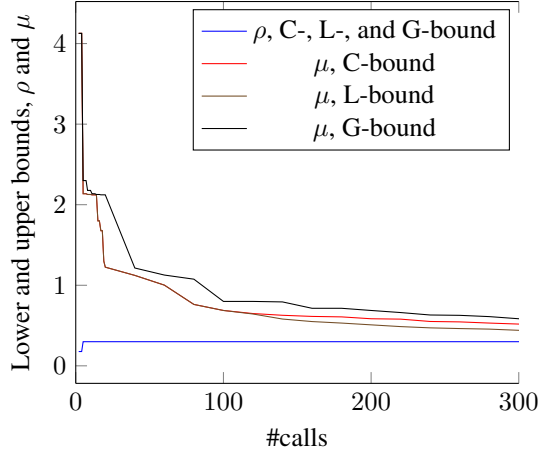
\begin{figure}[t]
\centering
\begin{tikzpicture}
\pgfplotsset{
    scale only axis,
    xmin=0, xmax=4000
}
\begin{axis}[
                axis y line* = left,
                height=2.00in,
                xlabel={\#calls},
                ylabel={Lower and upper bounds, $\rho$ and $\mu$},
                xmin=0, xmax=300
            ]
    \addplot    +[mark=none] table[x=calls, y=lower_bound, col sep = tab]  {data/ub_refined_second_cat0_david8_0.1_all.txt};
    \addplot    +[mark=none] table[x=calls, y=upper_bound, col sep = tab]  {data/ub_refined_second_cat0_david8_0.1_all.txt};
    \addplot    +[mark=none] table[x=calls, y=upper_bound, col sep = tab]  {data/ub_combined_second_cat0_david8_0.1_all.txt};
    \addplot    +[mark=none] table[x=calls, y=upper_bound, col sep = tab]  {data/ub_grob_second_cat0_david8_0.1_all.txt};
    \addlegendentry{$\rho$, C-, L-, and G-bound};
    \addlegendentry{$\mu$, C-bound};
    \addlegendentry{$\mu$, L-bound};
    \addlegendentry{$\mu$, G-bound};
\end{axis}
\end{tikzpicture}
    \caption{Lower and upper bounds.}
\label{fig:lower_upper_bound_calls}
\end{figure}

\section{Multi-critical filtrations.}
\label{app:k_critical}
Let $K$ be a simplicial complex. Let $\phi$ be a function
that assigns to every simplex $\sigma \in K$
a non-empty finite set $\phi(\sigma) = \{ p_1, \dots, p_s \}$ of points in $\RR^2$.
Each of the points $p_i$ is called a critical value of $\sigma$.
The cardinality $s$ of the set $\phi(\sigma)$ can depend on $\sigma$,
but we suppress this in the notation.
For $p \in \RR^2$, put
\[
    K_p := \{ \sigma \in K \mid p_i \pleq p \mbox{ for at least one } p_i \in \phi(\sigma) \}.
\]
In other words, the set $P_{\sigma} =  \{ p \in \RR^2 \mid \sigma \in K_p \}$
(the set of points $p$ such that $\sigma$ is present in $K_p$)
is bounded by a 'staircase' with corners at the critical values $p_1, \dots, p_k$ (we assume
that $p_i$ and $p_j$ are incomparable for $i \neq j$).
The pair $F = (K, \phi)$ is called a $k$-critical bi-filtration,
if a) for all $\sigma \in K$, the cardinality of $\phi(\sigma)$ is at most $k$
and b) if $\sigma \subseteq \tau$, then $P_{\tau} \subseteq P_{\sigma}$.

In order to restrict a bi-filtration $F$ on a slice $\slice$,
we must, for each $\sigma \in K$, find the intersection of $\slice$ and the boundary of $P_{\sigma}$.
Obviously, the intersection point is the smallest push of the critical values of $\sigma$.
Thus the weighted restriction is determined by
\[
    \sigma \mapsto \wpush(\sigma, \slice) := \min_{i = 1,\dots,s} \wpush(p_i, \slice).
\]
As before, we write $\wpush(\sigma, \slice)$ or $\wpush_\sigma(\slice)$.
Similarly to \cref{sec:bound_primitive}, we now define variations
\[
    v(\sigma,\slicebox):=\max_{\slice\in\slicebox} |\weightedpush_\sigma(\slice)-\weightedpush_\sigma(\centerslice)|
\]
and
\[
v(F,\slicebox):=\max_{\sigma \in K} v(\sigma,\slicebox).
\]
\begin{mylem}
    \label{lem:min_max_multicrit}
    For $a_1, \dots, a_s$ and $b_1, \dots, b_s$ real,
    \[
        | \min_i(b_i) - \min_i(a_i) | \leq \max_i | b_i - a_i |.
    \]
\end{mylem}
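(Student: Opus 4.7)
The plan is to reduce the inequality to comparing the two minima through a single common index, exploiting symmetry to eliminate the absolute value on the left-hand side. Let $M := \max_i |b_i - a_i|$. Because $M$ is symmetric in the two sequences, I can assume without loss of generality that $\min_i b_i \geq \min_i a_i$; otherwise I swap the roles of $(a_i)$ and $(b_i)$.

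Next, I would pick an index $j$ attaining the smaller minimum, so $a_j = \min_i a_i$. Since $\min_i b_i$ is a minimum over \emph{all} indices, it is in particular bounded above by $b_j$. Stringing these observations together yields
\[
0 \leq \min_i b_i - \min_i a_i \leq b_j - a_j \leq |b_j - a_j| \leq M,
\]
which is exactly the claimed bound.

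There is essentially no obstacle: the only step requiring a choice is picking the index $j$ that attains the minimum of the sequence giving the smaller of the two minima. Once this index is fixed, one uses it to upper-bound the minimum of the other sequence and the inequality drops out in one line. The symmetry reduction at the start is what makes a single index suffice instead of requiring a case distinction on the sign of $\min_i b_i - \min_i a_i$.
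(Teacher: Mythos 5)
Your proof is correct, and it follows essentially the same route as the paper's: assume by symmetry that the $a$-sequence attains the smaller minimum, take an index $j$ realizing $\min_i a_i$, and bound $\min_i b_i \leq b_j$ so that $\min_i b_i - \min_i a_i \leq b_j - a_j \leq \max_i |b_i - a_i|$. The paper phrases this via the overall minimum $c=\min(\min_i a_i,\min_i b_i)$ assumed WLOG to be $a_1$, which is the same argument with the same single-index trick.
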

\begin{proof}
    Let $c = \min(\min_i(a_i), \min_i(b_i))$. W.l.o.g., $c = a_1$.
    Then $ b_1 = a_1 + (b_1 - a_1) = c + | b_1 - a_1| \leq c + \max_i | b_i - a_i |$.
    Since $\min_i(b_i) \leq b_1$, this implies
    $\min_i(b_i) \leq c + \max_i | a_i - b_i |$. Hence
    $| \min_i(b_i) - \min_i(a_i) | = \min_i(b_i) - c \leq \max_i | a_i - b_i|$.
\end{proof}
\begin{mylem}
    The variation of a simplex $\sigma$ does not exceed the maximal variation
    of its critical values:
    $v(\sigma, B) \leq \max_{p_i \in \phi(\sigma)} v(p_i, B)$.
\end{mylem}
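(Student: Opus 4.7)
The plan is to combine the min-max inequality just proved in the previous lemma with the definition of $\wpush(\sigma,\slice)$ as a minimum over critical values, and then compare point-wise before taking the maximum over $\slice \in \slicebox$.

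First I would fix an arbitrary slice $\slice \in \slicebox$ and unfold the definition: since $\wpush_\sigma(\slice) = \min_i \wpush_{p_i}(\slice)$ and likewise $\wpush_\sigma(\centerslice) = \min_i \wpush_{p_i}(\centerslice)$, I can apply \cref{lem:min_max_multicrit} with $a_i = \wpush_{p_i}(\centerslice)$ and $b_i = \wpush_{p_i}(\slice)$ to get
\[
|\wpush_\sigma(\slice) - \wpush_\sigma(\centerslice)| \leq \max_i |\wpush_{p_i}(\slice) - \wpush_{p_i}(\centerslice)|.
\]

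Next, for each index $i$, the quantity $|\wpush_{p_i}(\slice) - \wpush_{p_i}(\centerslice)|$ is bounded by $v(p_i,\slicebox)$ by the definition of the variation of a point. Taking the maximum over $i$ on the right gives
\[
|\wpush_\sigma(\slice) - \wpush_\sigma(\centerslice)| \leq \max_{p_i \in \phi(\sigma)} v(p_i, \slicebox).
\]
The right-hand side is independent of $\slice$, so taking the maximum over $\slice \in \slicebox$ on the left yields $v(\sigma,\slicebox) \leq \max_{p_i \in \phi(\sigma)} v(p_i,\slicebox)$, as required.

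There is no real obstacle here: the only nontrivial ingredient is the min-max inequality, and that has already been dispatched as \cref{lem:min_max_multicrit}. The rest is just ordering the maxima correctly (point-wise bound first, then sup over $\slice$).
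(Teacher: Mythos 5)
Your proof is correct and follows essentially the same route as the paper: both apply the min--max inequality of \cref{lem:min_max_multicrit} to the weighted pushes of the critical values and then take the maximum over $\slice\in\slicebox$, the only cosmetic difference being that you bound each term by $v(p_i,\slicebox)$ pointwise before maximizing over $\slice$, whereas the paper maximizes over $\slice$ first and then interchanges the two maxima. No gaps.
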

\begin{proof}
    In Lemma \ref{lem:min_max_multicrit}, set $a_i = \wpush(p_i, \slice)$ and 
    $b_i = \wpush(p_i, \centerslice)$.
    We obtain
    \[
    | \min_{i} \wpush(p_i, \slice) - \min_i \wpush(p_i, \centerslice) | \leq 
        \max_{i}  | \wpush(p_i, \slice) -  \wpush(p_i, \centerslice) |.
    \]
    Take maximum over $\slice$:
     \[
         \max_{\slice \in \slicebox} | \min_{i} \wpush(p_i, \slice) - \min_i \wpush(p_i, \centerslice) | \leq 
         \max_{\slice \in \slicebox} \max_{i}  | \wpush(p_i, \slice) -  \wpush(p_i, \centerslice) |.
    \]
    Interchange two $\max$ operations:
      \[
         \max_{\slice \in \slicebox} | \min_{i} \wpush(p_i, \slice) - \min_i \wpush(p_i, \centerslice) | \leq 
         \max_{i}  \max_{\slice \in \slicebox} | \wpush(p_i, \slice) -  \wpush(p_i, \centerslice) |.
    \]
    This is precisely the statement of the lemma.
\end{proof}
This lemma implies that our algorithm with the L-bound needs only one modification
for the $k$-critical case.  Namely, for each simplex $\sigma$ of $F^{1,2}$, 
we compute (using Theorem \ref{thm:corner-point}) the
maximal variation of the critical values of $\sigma$.
The lemma also implies that the $C$-bound and the $G$-bound remain unchanged.
\end{appendix}
\end{document}